\documentclass[11pt,onecolumn]{IEEEtran}
\usepackage{amsmath}
\usepackage{indentfirst,mathrsfs}
\usepackage{amsfonts,caption2}
\usepackage{amsmath,amsthm,amssymb}
\usepackage{color,xcolor}
\usepackage{graphicx}
\usepackage{epstopdf}
\usepackage{multirow}

\setlength{\oddsidemargin}{0in} \setlength{\textwidth}{6.3in}
\setlength{\topmargin}{-0.28in} \setlength{\textheight}{9.5in}
\newtheorem{Theorem}{Theorem}
\newtheorem{Corollary}{Corollary}
\newtheorem{Definition}{Definition}
\newtheorem{Example}{Example}
\newtheorem{Remark}{Remark}

\newtheorem{Lemma}{Lemma}

\makeatletter
\newcommand{\romanNum}[1]{\@roman{#1}}
\makeatother

\begin{document}
\title{New  Binary Cross Z-Complementary Pairs With Large CZC Ratio}
\author{Hui Zhang\thanks{School of Mathematics, Southwest Jiaotong University, Chengdu, 610031, China.
Email: 875009327@qq.com},
~~Cuiling Fan\thanks{School of Mathematics, Southwest Jiaotong University, Chengdu, 610031, China.
Email: fcl@swjtu.edu.cn},
~~Sihem Mesnager\thanks{Department of Mathematics, University of Paris
VIII, 93526 Saint-Denis, with University Sorbonne Paris Cit\'{e}, LAGA,
UMR 7539, CNRS, 93430 Villetaneuse, and also with the T\'{e}l\'{e}com Paris,
91120 Palaiseau, France.
Email: smesnager@univ-paris8.fr}}

\date{\today}
\maketitle

\begin{abstract}
\noindent
Cross Z-complementary pairs (CZCPs) are a special
kind of Z-complementary pairs (ZCPs) having zero autocorrelation sums around the in-phase position and end-shift position, also having zero cross-correlation sums around the end-shift position. CZCPs can be utilized as a key component in designing optimal training sequences for broadband spatial modulation (SM) systems over frequency-selective channels.
In this paper, we focus on designing new CZCPs with large cross Z-complementary ratio $(\mathrm{CZC}_{\mathrm{ratio}})$.
A construction framework of CZCPs with large ZCZ ratio are proposed by using the Turyn's method on some seed CZCPs and GCPs. By choosing suitably the seed CZCPs, we obtain 16 classes of new CZCPs with large $\mathrm{CZC}_{\mathrm{ratio}}$. Especially, if the GCP is strengthened, our resultant CZCPs have the maximum $\mathrm{CZC}_{\mathrm{ratio}}\approx\frac{M-1}{M}$ for $M\in\{6,12,24,28\}$. We also obtain optimal CZCPs with new parameters $(28,13)$, $(48,23)$ and $(56,27)$, which can be extended to
 $(48N,23N)$-$\mathrm{CZCPs}$ and $(56N,27N)$-$\mathrm{CZCPs}$  respectively for any Golay number $N$.
\end{abstract}

\noindent{\bfseries Keywords}: Cross Z-complementary pairs (CZCPs), Golay complementary pairs (GCPs), Turyn's Method, Spatial modulation
(SM).

\section{Introduction}

In 1951, M.J. Golay \cite{1951-Golay} found that infrared multislit spectrometry, a device which isolates a desired radiation with a fixed single wavelength from background radiation (with many different wavelengths), can be designed with the aid of a special class of pairs of sequences, which is widely known as Golay complementary pair (GCP). GCPs are pairs of sequences whose aperiodic autocorrelation sums (AACSs) are zero everywhere, except at the zero shift \cite{1951-Golay}.
Binary GCPs are only known to exist for even lengths of
the form $2^{\alpha}10^{\beta}26^{\gamma}$ (where $\alpha$, $\beta$ and $\gamma$ are nonnegative integers)\cite{2003-7-Borwein}. Due to its scarity, Fan {\it et al.} \cite{2007-8-Fan} relaxed the autocorrelation constraint of GCPs, and introduced binary Z-complementary pair (ZCP), which is a pair of sequences whose AACSs are zero within a zone around the zero shift position, called zero correlation zone (ZCZ). Compared with binary GCPs, binary ZCPs can have more flexible lengths. Actually,  binary ZCPs are shown to available for all lengths \cite{2011-1-Li}.
GCPs and ZCPs have found many applications in wireless communications which include optimal
 channel estimation  in multiple-input multiple-output (MIMO) fequency-selective channels \cite{2001-Spasojevic}, radar\cite{2018-Kumari}, and peak power control in orthogonal frequency division multiplexing (OFDM) \cite{1999-GDJ},\cite{2014-Wang}, etc.

Spatial modulation (SM) is a special class of MIMO technique, which optimizes multiplexing gain with complexity and performance \cite{Yang-2005},\cite{2008-Mesleh} and \cite{Yang-2016}. The main difference of SM system with a traditional MIMO is that is equipped with a single radio-frequency (RF) chain, which prevents the transmitter from using simultaneous pilot transmission over all the transmit antennas. Consequently, it implies that dense training sequences proposed in \cite{2002-A.Yang}-\cite{2004-Fan} for traditional MIMO are not applicable in SM systems. Although an identity matrix has been employed for joint channel estimation and data detection in SM systems \cite{2012-Sugiura}, extension to frequency-selective channels is not straightforward.

To deal with this problem, recently
Liu {\it et al.} \cite{Liu-2020} proposed a new class of sequence pairs called cross Z-complementary pairs (CZCPs), each displaying certain ZCZ properties for both their AACSs and aperiodic cross-correlation sums (ACCSs) where the formal definition will be given in Section II.
They pointed out that CZCPs can be utilized as a key component in designing optimal training sequences for broadband SM systems over frequency-selective channels \cite{Liu-2020}.

In \cite{Liu-2020}, Liu {\it et al.} pointed out that the ZCZ width of a CZCP is no more than  $N/2$, where $N$ is the sequence length, and a CZCP is called {\it perfect} if it achieves the maximum ZCZ width $N/2$. Perfect CZCPs are actually special GCPs, called strengthened GCPs, whose length are very limited. For example, binary perfect CZCPs exist only for lengths of the form $2^{\alpha+1}10^\beta26^\gamma$, where $\alpha,\beta,\gamma\geq0$ \cite{Liu-2020,Fan-2020}. To define the optimality of CZCPs, Adhikary {\it et al.}  \cite{Adhikary-2020} introduced the concept of {\it cross Z-complementary ratio} ($\mathrm{CZC}_{ratio}$), which is defined as $\mathrm{CZC}_{ratio}=Z/Z_{\max}$, where $Z$ is the actual ZCZ width, and $Z_{\max}$ is the possible maximum achievable ZCZ width for a given sequence length $N$. If $\mathrm{CZC}_{ratio}=1$, the CZCP is called {\it optimal}, which means that the maximum ZCZ width can be achievable. Note that $Z=Z_{\max}=N/2$ only when CZCPs are perfect, thus for binary non-perfect CZCPs, we will let $Z_{\max} =N/2-1$ for $\mathrm{CZC}_{ratio}$ calculation. The fact that $N$ is even was pointed out in \cite{Liu-2020} for binary CZCPs. In the following, we use the notation $(N,Z)$-CZCP to denote a CZCP having sequence length $N$ and ZCZ width $Z$.

A systematic construction of optimal CZCPs is challenging. Besides perfect CZCPs, binary optimal CZCPs were only known to exist for length 6, 12, 14 and 24 \cite{Liu-2020,Adhikary-2020}. Therefore, constructions of CZCPs with large  $\mathrm{CZC}_{ratio}$ are desired. In \cite{Liu-2020}, Liu {\it et al.} proposed constructions of binary perfect CZCPs with length $2^{\alpha+1}10^\beta26^\gamma~(\alpha,\beta,\gamma\geq 0 )$ or $2^m ~(m\geq2)$, via special types of binary GCPs or generalized Boolean functions (GBFs) respectively. In \cite{Adhikary-2020}, Adhikary {\it et al.} applied insertion function or GBFs to obtain binary CZCPs of lengths $2^{m}+2(m\geq3)$, $2^\alpha10^\beta26^\gamma+2(\alpha\geq1),~10^\beta+2,26^\gamma +2$ and $10^\beta26^\gamma+2$ respectively. Their resultant CZCPs all have $\mathrm{CZC}_{ratio}\approx1/2$. They also proposed an optimal construction of binary $(12,5)$-CZCP and $(24,11)$-CZCP using  binary Barker sequences, which leaded to $(12N,5N)$-CZCPs and $(24N,11N)$-CZCPs, where $N$ is the length of a GCP. Based on some structural properties of GCPs in \cite{R. Adhikary-2020}, Fan {\it et al.} \cite{Fan-2020} proposed new sets of binary CZCPs with parameters $(10^{\beta},4\times 10^{\beta-1}),~(26^{\gamma},12\times 26^{\gamma-1}),(10^{\beta}26^{\gamma},12\times 10^{\beta}26^{\gamma-1})$ respectively. These CZCPs are all GCPs.
In \cite{Huang-2020}, by applying Boolean functions (BFs), Huang {\it et al.} constructed binary $(2^{m-1}+2^{v+1},2^{\pi(v+1)-1}+2^v-1)-\mathrm{CZCPs}$ ($0\le v\le m-3$, $m\ge 4$) having $\mathrm{CZC}_{ratio}\approx 2/3$. In \cite{yang-2021}, binary CZCPs of lengths $2N$ were constructed from a binary ZCPs with even-length $N$. The largest $\mathrm{CZC}_{ratio}$ of their resultant CZCPs are approximately 6/7. Quaternary CZCPs were also constructed by utilizing binary CZCPs in \cite{yang-2021}. A summary of known binary $(N,Z)$-CZCPs (with maximum $Z$) are listed in Table 3.

In this paper, we also focus on designing binary CZCPs with large $\mathrm{CZC}_{ratio}$. A construction framework of CZCPs is proposed by applying Turyn's method on some seed CZCPs and GCPs (see Theorem \ref{main-thm}). By choosing suitably the seed CZCPs, we can obtain 16 classes of binary CZCPs with length $MN$ which are listed in Table 3. Here $M\in\{6,12,24,28\}$ and $N$ is the length of some GCP. Especially, if the GCP is strengthened, our resultant CZCPs have the maximum $\mathrm{CZC}_{\mathrm{ratio}}\approx \frac{M-1}{M}$, till date. The choices of seed $(M,\frac{M}{2}-1)$-CZCPs are technical, i.e., they are not only optimal, but satisfy that the absolute value of AACS at $N/2$ time shift is 2. This property implies that our seed CZCPs have better AACSs and ACCSs distributions than the CZCPs with same parameters in \cite{Liu-2020,Adhikary-2020} (See Table 1).
  Besides, if further let $N=2$ in our construction,  the seed CZCPs lead to optimal $(2M,M-1)$-CZCPs with $M\in\{6,12,24,28\}$. We emphasize that optimal CZCPs with parameters $(28,13)$, $(48,23)$ and $(56,27)$ have never been reported before in the literature, and also can be extended to $(28N,13N)$-CZCP, $(48N,23N)$-CZCP and $(56N,27N)$-CZCP via Turyn's method, respectively. Here $N$ is the length of a GCP.

The rest of the paper is organized as follows. In Section \ref{Sec-Preliminaries}, we give some basic notations and definitions which will be needed in this paper. Section \ref{Sec-constructions} is the core of the paper in which we present the main construction for designing CZCPs with large CZC ratio. Some properties of optimal CZCPs are also deduced. 
Section \ref{Sec-Conclusions} concludes this paper.

\section{Preliminaries}\label{Sec-Preliminaries}

Let us fix the following notations, which will be used throughout the paper:

\begin{itemize}
\item $N$ is an even integer.
\item ${\bf a}=(a_0,a_1,\cdots,a_{N-1})$ denotes a binary sequence of length $N$, with with $a_i\in\{1,-1\}$ for any $0\leq i\leq N-1$.
\item $\overleftarrow{{\bf a}}=(a_{N-1},a_{N-2},\cdots,a_0)$ denotes the reverse of the sequence ${\bf a}$.
	\item $-{\bf a}=(-a_0,-a_1,\cdots,-a_{N-1})$ denotes the negation of sequence ${\bf a}$.
	\item ${\bf a}\otimes{\bf b}$ denotes the kronecker product of sequence ${\bf a}$ and ${\bf b}$.
\item ${\bf 0}_{N}$ denotes the all-zero vector of length $N$.
\item $1$ and $-1$ are denoted by $+$ and $-$, respectively.
\item A number of the form $2^\alpha 10^\beta 26^\gamma (\alpha,\beta,\gamma\geq 0$) is called a Golay number.
\end{itemize}

\subsection{Z-Complementary Pair}
\begin{Definition}\label{AACF}
The aperiodic cross-correlation function (ACCF) of two binary sequences ${\bf a}$ and ${\bf b}$ of length $N$ at time-shift $u$ is defined as
\begin{equation*}
\rho({\bf a},{\bf b};u)=
\left\{
\begin{array}{ll}
	\sum\limits_{i=0}^{N-1-u}a_ib_{i+u}, &0\le u\le N-1;\\
  \sum\limits_{i=0}^{N-1+u}a_{i-u}b_{i}, &1-N\le u<0;\\
  0,&|u|\ge N.
\end{array}
 \right.
\end{equation*}
If ${\bf a}={\bf b}$, $\rho({\bf a},{\bf a};u)$ is often written as $\rho({\bf a};u)$, and called the aperiodic autocorrelation function (AACF) of ${\bf a}$ at time-shift $u$.
\end{Definition}

The following are some basic correlation properties.

\begin{Lemma}\label{LT}(\cite{Fiedler-2006},\cite{Adhikary-2020})
Let ${\bf a}$ and ${\bf b}$ be two binary sequences of length $N$. Then for any $0\leq u\leq N-1$,
\begin{enumerate}
	\item $\rho({\bf b},{\bf a};u)=\rho({\bf a},{\bf b};-u)$;
	\item $\rho({\bf a};u)=\rho( \overleftarrow{{\bf a}};u)$;
	\item $\rho({\bf a}, \overleftarrow{{\bf b}};u)=\rho({\bf b},\overleftarrow{{\bf a}};u)$.
\end{enumerate}
\end{Lemma}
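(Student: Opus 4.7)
The plan is to prove each of the three identities directly from Definition \ref{AACF} by a simple change-of-summation-index, since no structural hypothesis on $\mathbf{a},\mathbf{b}$ is used beyond being binary sequences of length $N$. All three assertions are purely index-shuffling statements, so the natural strategy is to expand the left-hand side, substitute $j=i+u$ (or $j=N-1-i$, or $j=N-1-i-u$), and recognize the result as the right-hand side.

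For part (1), I would start from $\rho(\mathbf{b},\mathbf{a};u)=\sum_{i=0}^{N-1-u}b_i a_{i+u}$ and let $j=i+u$; then $i=j-u$, and as $i$ runs from $0$ to $N-1-u$, $j$ runs from $u$ to $N-1$. The sum becomes $\sum_{j=u}^{N-1}a_j b_{j-u}=\sum_{j=0}^{N-1-(-u)-1+1\cdots}$; more cleanly, matching against the second branch of the definition with shift $-u$ gives exactly $\rho(\mathbf{a},\mathbf{b};-u)$. One has to be a tiny bit careful with the endpoints and the edge case $u=0$ (where both sides collapse to the inner product), and to note that the identity for $u<0$ or $|u|\ge N$ follows either by symmetry or is trivial.

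For part (2), I would write $\rho(\overleftarrow{\mathbf{a}};u)=\sum_{i=0}^{N-1-u}a_{N-1-i}\,a_{N-1-i-u}$ using the definition of the reverse. Substituting $j=N-1-i-u$ transforms this into $\sum_{j=0}^{N-1-u}a_{j+u}a_j$, which is exactly $\rho(\mathbf{a};u)$. This is the cleanest of the three: the reverse preserves autocorrelation values.

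For part (3), I would combine the previous two. Writing out $\rho(\mathbf{a},\overleftarrow{\mathbf{b}};u)=\sum_{i=0}^{N-1-u}a_i b_{N-1-i-u}$ and substituting $j=N-1-i-u$ yields $\sum_{j=0}^{N-1-u}a_{N-1-j-u}b_j=\rho(\mathbf{b},\overleftarrow{\mathbf{a}};u)$ after reading off the terms as an ACCF of $\mathbf{b}$ against the reverse of $\mathbf{a}$. Alternatively, one can first apply (1) to swap the two arguments and then apply (2) to the reverse factor; both routes are short. None of the three steps is a real obstacle, the only mild pitfall being bookkeeping of summation bounds (especially making sure the substitution maps the range $\{0,\ldots,N-1-u\}$ bijectively onto itself or onto the correct range for the target formula), and remembering to state the identities in the regime $u<0$ by invoking $\rho(\cdot,\cdot;u)$ at $-u$ through part (1).
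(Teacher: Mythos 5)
Your index-substitution verification of all three identities is correct, and it is the standard (and essentially the only) argument: expanding each side via Definition \ref{AACF} and re-indexing by $j=i+u$ or $j=N-1-i-u$ does map the summation range bijectively as you claim, so each pair of sums coincides term by term. Note that the paper itself states Lemma \ref{LT} as a known result cited from \cite{Fiedler-2006} and \cite{Adhikary-2020} and gives no proof, so your write-up simply supplies the routine computation the paper leaves to the references; only the half-garbled bound ``$\sum_{j=0}^{N-1-(-u)-1+1\cdots}$'' in part (1) should be cleaned up to the explicit second branch $\sum_{i=0}^{N-1-u}a_{i+u}b_i$, which your sum $\sum_{j=u}^{N-1}a_j b_{j-u}$ equals after the shift $i=j-u$.
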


%

\begin{Definition}\label{ZCP}

For two sequence ${\bf a}$ and ${\bf b}$ of length $N$, if
\begin{center}
	$\rho({\bf a};u)+\rho({\bf b};u)=0$, ~~for any $0<|u|<Z$,
\end{center}
then the pair $({\bf a},{\bf b})$ is called a $Z$-complementary pair with zero correlation zone (ZCZ) width $Z$, denoted by an $(N,Z)$-ZCP. Note that the $(N,Z)$-ZCP is reduced to the Golay sequence pair (GCP) by taking $Z=N$.
\end{Definition}

\subsection{Cross Z-Complementary Pair}

Cross Z-complementary pairs (CZCPs) are special
ZCPs.
In contrast to ZCPs, CZCPs deals with cross-correlation sums in addition to calculating autocorrelation sums.

\begin{Definition}\label{CZCP}(\cite{Liu-2020})
 For positive integers $N$ and $Z$ with $Z\le N$,  two intervals are defined as $\mathcal{T}_1\triangleq\{1,2,\cdots,Z\}$ and $\mathcal{T}_2\triangleq\{N-Z,N-Z+1,\cdots,N-1\}$.
Then a sequence pair $({\bf a},{\bf b})$ of length $N$ is called an $(N,Z)-\mathrm{CZCP}$ if it  satisfies the following two conditions:
\begin{equation}\label{CZCP-equ}
\begin{array}{ll}
{\mathrm{C1}}:\quad \rho({\bf a};u)+\rho({\bf b};u)=0 ,& \mbox{for}~|u|\in \mathcal{T}_1\cup \mathcal{T}_2;\\
{\mathrm{C2}}:\quad \rho({\bf a,b};u)+\rho({\bf b,a};u)=0,& \mbox{for}~|u|\in \mathcal{T}_2.
\end{array}
\end{equation}
\end{Definition}

%
%

Note that a CZCP has two symmetric zero autocorrelation zones (ZACZs), and one tail-end zero cross-correlation zone (ZCCZ). There exists an upper bound on the ZCZ width mentioned in \cite{Liu-2020}.

\begin{Lemma}\label{SGCP}(\cite{Liu-2020})
An $(N,Z)$-CZCP satisfies $Z\leq N/2$. When $N$ is even and $Z=N/2$, the CZCP is called perfect or the strengthened GCP. Otherwise it is called non-perfect when $Z<N/2$.	
\end{Lemma}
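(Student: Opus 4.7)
The plan is to prove only the upper bound $Z \le N/2$; the remaining sentences of the statement are definitions (perfect and non-perfect), which need no argument. The strategy is to diagonalise the pair by setting $\mathbf{f} = \mathbf{a} + \mathbf{b}$ and $\mathbf{g} = \mathbf{a} - \mathbf{b}$ componentwise, turning the mixed CZCP conditions on $R(u)=\rho(\mathbf{a};u)+\rho(\mathbf{b};u)$ and $R_c(u)=\rho(\mathbf{a},\mathbf{b};u)+\rho(\mathbf{b},\mathbf{a};u)$ into pure autocorrelation conditions on two ``ternary'' sequences whose supports partition $\{0,1,\ldots,N-1\}$, and then to derive a contradiction from the assumption $Z > N/2$ by a short boundary-propagation argument.

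First I would verify the polarisation identities $\rho(\mathbf{f};u) = R(u) + R_c(u)$ and $\rho(\mathbf{g};u) = R(u) - R_c(u)$ by direct expansion using $(x+y)(x'+y') \pm (x-y)(x'-y') = 2(xx'+yy')$ or $2(xy'+yx')$. Because $a_i, b_i \in \{\pm 1\}$, at each index $i$ exactly one of $f_i, g_i$ is zero while the other equals $\pm 2$; hence $\mathrm{supp}(\mathbf{f})$ and $\mathrm{supp}(\mathbf{g})$ partition $\{0,\ldots,N-1\}$. Under C1 and C2, for every $u \in \mathcal{T}_2$ both $R(u)=0$ and $R_c(u)=0$, so $\rho(\mathbf{f};u) = \rho(\mathbf{g};u) = 0$ throughout $\mathcal{T}_2$.

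Next, assume for contradiction that $Z > N/2$. The boundary equations $\rho(\mathbf{f};N-1) = f_0 f_{N-1} = 0$ and $\rho(\mathbf{g};N-1) = g_0 g_{N-1} = 0$, combined with the partition property, pin down two symmetric starting configurations; WLOG $f_0 = 0$ and $g_{N-1} = 0$, whence $g_0 \ne 0$ and $f_{N-1} \ne 0$. I would then push an induction on $k = 0,1,\ldots,Z-1$: if $f_0 = \cdots = f_{k-1} = 0$ and $g_{N-1} = \cdots = g_{N-k} = 0$, the sum $\rho(\mathbf{f};N-1-k) = \sum_{i=0}^{k} f_i f_{i+N-1-k}$ collapses to the single surviving term $f_k f_{N-1}$, which must vanish and forces $f_k = 0$; symmetrically $\rho(\mathbf{g};N-1-k)$ collapses to $g_0 g_{N-1-k}$ and forces $g_{N-1-k} = 0$. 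This yields $\{0,1,\ldots,Z-1\} \subseteq \mathrm{supp}(\mathbf{g})$ and $\{N-Z,\ldots,N-1\} \subseteq \mathrm{supp}(\mathbf{f})$, but $Z > N/2$ means these ranges overlap on $\{N-Z,\ldots,Z-1\}$, contradicting $\mathrm{supp}(\mathbf{f}) \cap \mathrm{supp}(\mathbf{g}) = \emptyset$.

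The main obstacle --- really the only delicate step --- is verifying that the inductive collapse above is clean: one must check, using the induction hypothesis that $f_i = 0$ for $i < k$, that every term in $\rho(\mathbf{f};N-1-k)$ except $f_k f_{N-1}$ genuinely vanishes, and must maintain the side invariant $f_{N-1}, f_{N-2}, \ldots, f_{N-k} \ne 0$ (forced by $g_{N-1} = \cdots = g_{N-k} = 0$ through the partition). After this bookkeeping, the contradiction via the forbidden overlap of supports is immediate, completing the bound $Z \le N/2$.
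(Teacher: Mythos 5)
Your proof is correct, but note that the paper does not actually prove this lemma: it is quoted from \cite{Liu-2020} without proof, so there is no in-paper argument to match. Your polarisation identities $\rho(\mathbf{f};u)=R(u)+R_c(u)$ and $\rho(\mathbf{g};u)=R(u)-R_c(u)$ are right, the support-partition observation is valid for binary pairs, and the boundary induction over $u=N-1-k\in\mathcal{T}_2$, $k=0,1,\ldots,Z-1$, does collapse cleanly: the $\mathbf{f}$-sum survives only in the term $f_kf_{N-1}$ (you only ever need $f_{N-1}\neq 0$ and $g_0\neq 0$, not the longer invariant you describe) and the $\mathbf{g}$-sum only in $g_0g_{N-1-k}$, so under $Z>N/2$ the nonempty overlap $\{N-Z,\ldots,Z-1\}$ of forced zero-sets contradicts the partition of supports. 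In substance your argument is the sum/difference-coordinate version of the structural fact the paper records as Lemma \ref{Cross-condition1}: with the normalisation $a_0=-b_0$, the conclusion $f_i=0$ for $0\le i\le Z-1$ is exactly $a_i=\frac{a_0}{b_0}b_i$, and $g_{N-1-i}=0$ is exactly $a_{N-1-i}=-\frac{a_0}{b_0}b_{N-1-i}$, from which $Z\le N/2$ follows by the same overlap argument; what your packaging buys is a single induction that handles C1 and C2 simultaneously and makes the contradiction purely combinatorial. One caveat worth stating explicitly: the support-partition step uses the binary alphabet in an essential way (exactly one of $f_i,g_i$ vanishes because $a_i,b_i\in\{1,-1\}$), so your proof covers the binary setting of this paper, whereas the cited result in \cite{Liu-2020} is formulated for CZCPs beyond the binary case.
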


%
%

Due to the scarcity of perfect CZCPs, Adhikary {\it et al.} \cite{Adhikary-2020} introduced the following definition and re-categorise the CZCPs.

\begin{Definition}\label{CZC}(\cite{Adhikary-2020})
The cross Z-complementary ratio $(\mathrm{CZC}_{\mathrm{ratio}})$ of an $(N,Z)-\mathrm{CZCP}$ is defined as
\begin{equation*}
\mathrm{CZC}_{\mathrm{ratio}}=\frac{Z}{Z_{\mathrm{max}}},
\end{equation*}
where $Z_{\mathrm{max}}$ denotes the possible maximum achievable ZCZ width for a given sequence length $N$. Obviously $\mathrm{CZC}_{\mathrm{ratio}}\le 1$. When $\mathrm{CZC}_{\mathrm{ratio}}=1$ which implies that $Z_{\mathrm{max}}$ is achieved, such $\mathrm{CZCP}$ is called {\it optimal}.\end{Definition}

A perfect CZCP is obviously an optimal CZCP satisfying $Z_{\mathrm{max}}=N/2$ and $N$ is a Golay number. Since binary GCPs are only known to exist for very limited lengths.
In this paper, we will focus on the constructions of binary non-perfect $(N,Z)$-CZCPs for which $N$ is even but not a Golay number.  Thus $Z_{\mathrm{max}}\leq N/2-1$. For a easy comparison with known CZCPs, we always let $Z_{\mathrm{max}}=N/2-1$ for $\mathrm{CZC}_{\mathrm{ratio}}$ calculation  when $Z_{\mathrm{max}}$ is unobtainable in this paper.

%

Similar to ZCPs, there also exist some invariance operations of CZCPs, such as interchange, negation or reverse. That is, if $({\bf a},{\bf b})$ is a $(N,Z)$-CZCP, then so are $(c_1{\bf a},c_2{\bf b})$, $(c_1{\bf b},c_2{\bf a})$ or $(c_1\overleftarrow{\bf b},c_2\overleftarrow{\bf a})$, where $c_1,c_2\in\{1,-1\}$ \cite{Liu-2020}. These CZCPs are all called equivalent.

The difference between ZCPs and CZCPs is the cross-correlation property. For binary sequence pairs,
there were some coclusions for the C2 condition in Definition \ref{CZCP} as follows.

\begin{Lemma}(\cite{Liu-2020})\label{Cross-condition1}
Let $({\bf a},{\bf b})$ be a binary $(N,Z)$-CZCP. Then for any $i=0,1,\cdots,Z-1$,
\begin{equation*}
	a_i=\frac{a_0}{b_0}b_i~~{\text and}~~a_{N-1-i}=-\frac{a_0}{b_0}b_{N-1-i}.
\end{equation*}
\end{Lemma}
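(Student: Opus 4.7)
The plan is to prove both identities simultaneously by strong induction on $i$, using together the autocorrelation condition C1 and the cross-correlation condition C2 of Definition~\ref{CZCP}. Set $c = a_0/b_0 \in \{+1,-1\}$, so $a_0 = c\,b_0$ holds trivially. For the base case $i=0$, evaluate C2 at the extremal shift $u=N-1 \in \mathcal{T}_2$. Both $\rho({\bf a},{\bf b};N-1)$ and $\rho({\bf b},{\bf a};N-1)$ collapse to a single term, giving $a_0 b_{N-1}+b_0 a_{N-1}=0$, hence $a_{N-1}=-c\,b_{N-1}$.

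For the inductive step, fix $1 \le i \le Z-1$ and assume $a_j = c\,b_j$ and $a_{N-1-j} = -c\,b_{N-1-j}$ for every $0 \le j \le i-1$. Evaluate both C1 and C2 at the shift $u = N-1-i$, which lies in $\mathcal{T}_2$ by the assumption on $i$. Each of the resulting sums has index range $j = 0, 1, \ldots, i$. In the ``middle'' range $1\le j \le i-1$, the paired index $j+u=N-1-(i-j)$ with $i-j\in\{1,\ldots,i-1\}$ falls in the tail portion already controlled by the induction hypothesis, so $a_{j+u}=-c\,b_{j+u}$; combined with $a_j = c\,b_j$ and $c^2=1$, every middle summand in both the C1 and the C2 sum cancels. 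What survives are the $j=0$ and $j=i$ boundary contributions, in which $a_0=c\,b_0$ and $a_{N-1}=-c\,b_{N-1}$ are substituted. After multiplying the resulting C1 identity by $c$, one obtains the coupled linear system
\begin{align*}
b_0\bigl(a_{N-1-i}+c\,b_{N-1-i}\bigr)+b_{N-1}\bigl(a_i-c\,b_i\bigr) &= 0\qquad\text{(from C2)},\\
b_0\bigl(a_{N-1-i}+c\,b_{N-1-i}\bigr)-b_{N-1}\bigl(a_i-c\,b_i\bigr) &= 0\qquad\text{(from C1)}.
\end{align*}
Since $b_0,b_{N-1}\in\{\pm1\}$, adding and subtracting these two equations immediately yields $a_{N-1-i}=-c\,b_{N-1-i}$ and $a_i=c\,b_i$, closing the induction.

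The main obstacle is a bookkeeping one: it requires careful verification that every middle-term pair $(j,\,j+u)$ in both the C1 sum and the C2 sum indeed cancels under the induction hypothesis, which relies on the precise index match $j+u = N-1-(i-j)$ together with the identity $c^2=1$. A conceptual subtlety worth flagging is that condition C2 alone would furnish only a single equation relating the two unknowns $a_i-c\,b_i$ and $a_{N-1-i}+c\,b_{N-1-i}$; the reason we can conclude \emph{both} identities at each step is that C1 supplies an independent relation, and together the two form an invertible $2\times 2$ system (the coefficients $b_0$ and $b_{N-1}$ are nonzero). This decoupling is what forces the joint ``front/tail'' structure stated in the lemma.
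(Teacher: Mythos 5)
Your proof is correct: the base case at $u=N-1$, the induction at $u=N-1-i\in\mathcal{T}_2$ (which also lies in $\mathcal{T}_1\cup\mathcal{T}_2$, so C1 applies there too), the cancellation of the middle terms via $c^2=1$, and the resulting invertible $2\times2$ system in $a_i-c\,b_i$ and $a_{N-1-i}+c\,b_{N-1-i}$ all check out. The paper itself gives no proof — the lemma is quoted from \cite{Liu-2020} — and your argument is essentially the standard induction on tail shifts used there, so there is nothing further to compare.
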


\begin{Lemma}(\cite{Huang-2020})\label{Cross-condition2}
Let ${\bf a}$ and ${\bf b}$ be two binary sequences of length $N$.
If for any $i=0,1,\cdots,Z-1$,
\begin{equation*}
	a_i=\frac{a_0}{b_0}b_i~~{\text and}~~a_{N-1-i}=-\frac{a_0}{b_0}b_{N-1-i}.
\end{equation*}Then
\begin{equation*}
  \rho({\bf a},{\bf b};u)+\rho({\bf b},{\bf a};u)=0,~\text{for}~|u|\in\{N-Z,N-Z+1,\cdots,N-1\}
\end{equation*}

\end{Lemma}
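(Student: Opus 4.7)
The plan is to write the cross-correlation sum directly from its definition and exploit the fact that, for the range of shifts we care about, both summation indices $i$ and $i+u$ are forced into the two zones where the hypotheses give a linear relationship between $\mathbf{a}$ and $\mathbf{b}$. Let $\epsilon = a_0/b_0 \in \{1,-1\}$; the hypotheses then read $a_i = \epsilon b_i$ for $0 \le i \le Z-1$ (the ``head'' zone) and, reindexing $j = N-1-i$, $a_j = -\epsilon b_j$ for $N-Z \le j \le N-1$ (the ``tail'' zone). These are the two key relations I will plug in.

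I would first dispatch the case $u \ge 0$ with $u \in \{N-Z,\ldots,N-1\}$. In $\rho(\mathbf{a},\mathbf{b};u) = \sum_{i=0}^{N-1-u} a_i b_{i+u}$ the summation index satisfies $0 \le i \le N-1-u \le Z-1$, so $i$ lies in the head zone and $a_i = \epsilon b_i$. Simultaneously, $i+u \ge u \ge N-Z$, so $i+u$ lies in the tail zone. Writing out $\rho(\mathbf{b},\mathbf{a};u) = \sum_{i=0}^{N-1-u} b_i a_{i+u}$ and applying $a_{i+u} = -\epsilon b_{i+u}$ gives
\begin{equation*}
\rho(\mathbf{a},\mathbf{b};u) = \epsilon \sum_{i=0}^{N-1-u} b_i b_{i+u}, \qquad \rho(\mathbf{b},\mathbf{a};u) = -\epsilon \sum_{i=0}^{N-1-u} b_i b_{i+u},
\end{equation*}
and the two expressions cancel. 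The whole point of the range restriction $|u| \ge N-Z$ is precisely to guarantee that both $i$ and $i+u$ fall in the controlled zones simultaneously.

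For $u < 0$, with $-u = v \in \{N-Z,\ldots,N-1\}$, I would use Lemma~\ref{LT}(1) to write $\rho(\mathbf{a},\mathbf{b};-v) = \rho(\mathbf{b},\mathbf{a};v)$ and $\rho(\mathbf{b},\mathbf{a};-v) = \rho(\mathbf{a},\mathbf{b};v)$, so that the sum $\rho(\mathbf{a},\mathbf{b};-v) + \rho(\mathbf{b},\mathbf{a};-v)$ collapses to the already-treated positive-shift case and hence vanishes. (Alternatively, a direct calculation from the $u<0$ branch of the ACCF definition goes through in exactly the same way, since the indices $i$ and $i+v$ again land in the head and tail zones respectively.)

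I do not expect any real obstacle here; the lemma is essentially the converse of Lemma~\ref{Cross-condition1}. The only thing to be careful about is to verify cleanly that, when $|u| \ge N-Z$, the summation index $i$ is automatically bounded by $Z-1$, so that the head-zone hypothesis applies to one argument of the product and the tail-zone hypothesis applies to the other. Once the ranges are pinned down, the cancellation is immediate from $\epsilon \cdot 1 + \epsilon \cdot (-1) = 0$.
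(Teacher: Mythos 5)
Your proof is correct: for $|u|\ge N-Z$ the summation index $i\le N-1-u\le Z-1$ falls in the head zone while $i+u\ge N-Z$ falls in the tail zone, so the two cross-correlations pick up opposite signs $\pm\epsilon$ and cancel, and the negative shifts follow from $\rho({\bf b},{\bf a};u)=\rho({\bf a},{\bf b};-u)$. The paper itself gives no proof of this lemma (it is quoted from \cite{Huang-2020}), and your direct index-range verification is exactly the standard argument one would supply for it.
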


\section{New Binary CZCPs with Large CZC Ratio}\label{Sec-constructions}

In this section, we will propose a construction framework of CZCPs by applying Turyn's method on some seed ZCPs and GCPs. By choosing suitably the seed CZCPs, we can obtain new CZCPs with large CZC ratio. Before then, we first introduce the Turyn's method on GCPs and CZCPs respectively, and a property about some optimal CZCPs, which will be used in our construction.

\begin{Lemma}\label{L_2}(\cite{1974-Turyn})
Let $\mathcal{A}\triangleq ({\bf a},{\bf b})$ and $\mathcal{B}\triangleq ({\bf c},{\bf d})$ be the first and the second binary GCPs of lengths $N$ and $M$, respectively. Then $ ({\bf s},{\bf t})\triangleq \mathrm{Turyn}(\mathcal{A},\mathcal{B})$ is a GCP of length $NM$, where
 \begin{eqnarray*}
  \mathbf{s}&=& \mathbf{c}\otimes\frac{\mathbf{a}+\mathbf{b}}{2}-{{\overleftarrow{\bf d}}}\otimes\frac{\mathbf{b}-\mathbf{a}}{2}, \\
  \mathbf{t}&= & \mathbf{d}\otimes\frac{\mathbf{a}+\mathbf{b}}{2}+{{\overleftarrow{\bf c}}}\otimes\frac{\mathbf{b}-\mathbf{a}}{2}.
\end{eqnarray*}
\end{Lemma}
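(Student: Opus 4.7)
The plan is to verify the defining GCP identity $\rho(\mathbf{s}; u) + \rho(\mathbf{t}; u) = 0$ for all $u \neq 0$ by passing to generating polynomials, where the aperiodic autocorrelation property becomes the single algebraic equation $X(z) X(z^{-1}) + Y(z) Y(z^{-1}) = 2L$ for a length-$L$ binary GCP $(\mathbf{x}, \mathbf{y})$ with polynomials $X(z), Y(z)$. This turns the whole lemma into a polynomial identity in $z, z^{-1}$.

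First I would introduce $\mathbf{p} = (\mathbf{a}+\mathbf{b})/2$ and $\mathbf{q} = (\mathbf{b}-\mathbf{a})/2$, which are $\{-1, 0, 1\}$-valued with disjoint supports, so $\mathbf{s}$ and $\mathbf{t}$ are indeed binary $\pm 1$ sequences of length $MN$. Expanding the GCP condition for $(\mathbf{a}, \mathbf{b})$ using $\mathbf{a} = \mathbf{p} - \mathbf{q}$ and $\mathbf{b} = \mathbf{p} + \mathbf{q}$, the mixed cross terms cancel in the sum and I obtain the cleaner identity $P(z)P(z^{-1}) + Q(z)Q(z^{-1}) = N$.

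Next, using that the Kronecker product $\mathbf{c} \otimes \mathbf{p}$ has generating polynomial $C(z^N) P(z)$ and that reversal satisfies $\overleftarrow{D}(z) = z^{M-1} D(z^{-1})$, I would write
\begin{align*}
S(z) &= C(z^N) P(z) - \overleftarrow{D}(z^N) Q(z), \\
T(z) &= D(z^N) P(z) + \overleftarrow{C}(z^N) Q(z),
\end{align*}
and expand $S(z) S(z^{-1}) + T(z) T(z^{-1})$. The ``square-type'' contributions, grouped by $P(z)P(z^{-1})$ and $Q(z)Q(z^{-1})$, combine via the GCP identity for $(\mathbf{c}, \mathbf{d})$ applied at the variable $w = z^N$, together with the reversal invariance $\overleftarrow{X}(z)\overleftarrow{X}(z^{-1}) = X(z)X(z^{-1})$ (item 2 of Lemma \ref{LT}), to give exactly $2M \cdot [P(z)P(z^{-1}) + Q(z)Q(z^{-1})] = 2MN$.

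The main obstacle, and the place where Turyn's specific placement of reversals matters, is showing that the four ``cross-type'' contributions cancel in pairs. Concretely, after collecting coefficients of $P(z) Q(z^{-1})$ and of $P(z^{-1}) Q(z)$, I would substitute $\overleftarrow{D}(z^{-N}) = z^{-N(M-1)} D(z^N)$ and $\overleftarrow{C}(z^{-N}) = z^{-N(M-1)} C(z^N)$, which turns the $P(z) Q(z^{-1})$ group into $z^{-N(M-1)}\bigl[D(z^N) C(z^N) - C(z^N) D(z^N)\bigr] = 0$, and symmetrically for the $P(z^{-1}) Q(z)$ group. The hard part is purely bookkeeping the signs and the $z^{M-1}$ factor from the reversal so that these cancellations are exact. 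Combining the two, $S(z) S(z^{-1}) + T(z) T(z^{-1}) = 2MN$, which is the GCP identity for $(\mathbf{s}, \mathbf{t})$ of length $MN$.
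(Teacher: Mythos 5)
Your argument is correct and complete: the identities $P(z)P(z^{-1})+Q(z)Q(z^{-1})=N$, the vanishing of the two cross-term groups after substituting $\overleftarrow{D}(z^{-N})=z^{-N(M-1)}D(z^{N})$ and $\overleftarrow{C}(z^{-N})=z^{-N(M-1)}C(z^{N})$, and the disjoint-support observation guaranteeing that $\mathbf{s},\mathbf{t}$ are $\pm1$-valued all check out, and your block-indexing convention for $\mathbf{c}\otimes\mathbf{p}$ agrees with the decomposition $\mathbf{s}_i=c_i\mathbf{p}-d_{M-1-i}\mathbf{q}$ used later in the paper's proof of Theorem \ref{main-thm}. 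The paper itself gives no proof of this lemma (it is quoted from Turyn's 1974 paper), so there is nothing to compare against; yours is the standard generating-polynomial proof and it is sound.
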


\begin{Lemma}\label{L_3}(\cite{Adhikary-2020})
Let $\mathcal{A}\triangleq ({\bf a},{\bf b})$ be a GCP of length $N$ and $\mathcal{B}\triangleq ({\bf c},{\bf d})$ be an $(M,Z)-\mathrm{CZCP}$. Then $ ({\bf s},{\bf t})\triangleq \mathrm{Turyn}(\mathcal{A},\mathcal{B})$  is an $(NM,NZ)-\mathrm{CZCP}$.
\end{Lemma}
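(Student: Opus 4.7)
The plan is to reduce the four correlation sums of interest --- $\rho(\mathbf{s};u)+\rho(\mathbf{t};u)$ and $\rho(\mathbf{s},\mathbf{t};u)+\rho(\mathbf{t},\mathbf{s};u)$ --- to known quantities about $(\mathbf{a},\mathbf{b})$ and $(\mathbf{c},\mathbf{d})$ by exploiting the Kronecker structure of $\mathbf{s},\mathbf{t}$. First I would set $\mathbf{p}=(\mathbf{a}+\mathbf{b})/2$ and $\mathbf{q}=(\mathbf{b}-\mathbf{a})/2$, giving $\mathbf{s}=\mathbf{c}\otimes\mathbf{p}-\overleftarrow{\mathbf{d}}\otimes\mathbf{q}$ and $\mathbf{t}=\mathbf{d}\otimes\mathbf{p}+\overleftarrow{\mathbf{c}}\otimes\mathbf{q}$. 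Because $a_i,b_i\in\{\pm 1\}$, the vectors $\mathbf{p},\mathbf{q}$ have disjoint $\{0,\pm 1\}$-support, which ensures that $\mathbf{s},\mathbf{t}$ are binary and kills several inner cross terms. A direct expansion will also yield $\rho(\mathbf{p};u)+\rho(\mathbf{q};u)=\tfrac{1}{2}(\rho(\mathbf{a};u)+\rho(\mathbf{b};u))$ together with analogous formulas for $\rho(\mathbf{p};u)-\rho(\mathbf{q};u)$ and $\rho(\mathbf{p},\mathbf{q};u)\pm\rho(\mathbf{q},\mathbf{p};u)$; the first identity already vanishes for all $u\ne 0$ by the GCP hypothesis.

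Next I would invoke the Kronecker-product decomposition of aperiodic correlations: for $u=qN+r$ with $0\le r<N$,
\[
\rho(\mathbf{x}\otimes\mathbf{y}_1,\mathbf{w}\otimes\mathbf{y}_2;u)=\rho(\mathbf{x},\mathbf{w};q)\,\rho(\mathbf{y}_1,\mathbf{y}_2;r)+\rho(\mathbf{x},\mathbf{w};q+1)\,\rho(\mathbf{y}_1,\mathbf{y}_2;r-N),
\]
with the convention $\rho(\cdot,\cdot;\pm N)=0$. Plugging in the definitions of $\mathbf{s}$ and $\mathbf{t}$, both $\rho(\mathbf{s};u)+\rho(\mathbf{t};u)$ and $\rho(\mathbf{s},\mathbf{t};u)+\rho(\mathbf{t},\mathbf{s};u)$ will become linear combinations whose outer factors involve $\mathbf{c},\mathbf{d},\overleftarrow{\mathbf{c}},\overleftarrow{\mathbf{d}}$ at shifts $q$ and $q+1$, multiplied by inner factors built from $\rho(\mathbf{p};r),\rho(\mathbf{q};r),\rho(\mathbf{p},\mathbf{q};r)$ and their $r-N$ analogues. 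Using Lemma \ref{LT}, I would group the outer factors into three types: (i) $\rho(\mathbf{c};q)+\rho(\mathbf{d};q)$; (ii) $\rho(\mathbf{c},\mathbf{d};q)+\rho(\mathbf{d},\mathbf{c};q)$; (iii) combinations that vanish identically, such as $\rho(\mathbf{c},\overleftarrow{\mathbf{d}};q)-\rho(\mathbf{d},\overleftarrow{\mathbf{c}};q)=0$ by Lemma \ref{LT}(3).

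For condition C1, when $|u|\in\{1,\ldots,NZ\}$ both $|q|$ and $|q+1|$ lie in $\{0,1,\ldots,Z\}$, so the type (i) factors vanish for every $q\ne 0$ by the ZACZ property of $(\mathbf{c},\mathbf{d})$; the surviving $q=0$ piece collapses via the $\mathbf{p},\mathbf{q}$ identities to a multiple of $\rho(\mathbf{a};r)+\rho(\mathbf{b};r)$, which is zero for $r\ne 0$ by the GCP hypothesis, while $r=0$ is excluded by $u\ne 0$. The tail-zone case $|u|\in\{NM-NZ,\ldots,NM-1\}$ mirrors this, using the tail-end ZACZ of $(\mathbf{c},\mathbf{d})$. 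For condition C2 the same expansion makes the type (ii) factor decisive, and it vanishes on the tail zone by condition C2 of $(\mathbf{c},\mathbf{d})$. The main obstacle I expect is the bookkeeping for C2: the expansion mixes ordinary and reversed sequences in cross-correlation sums, so one must systematically pair outer factors via Lemma \ref{LT} --- especially $\rho(\mathbf{a},\overleftarrow{\mathbf{b}};u)=\rho(\mathbf{b},\overleftarrow{\mathbf{a}};u)$ --- to recast every surviving term into one of the types (i)--(iii), and one must separately check the boundary shifts $q=0$ and $q=M-1$, where one of the two Kronecker-decomposition terms is absent.
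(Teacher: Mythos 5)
Your reduction of C1 is sound and is essentially the standard argument: after the Kronecker decomposition the cross terms carry the outer factors $\rho(\mathbf{c},\overleftarrow{\mathbf{d}};q)-\rho(\mathbf{d},\overleftarrow{\mathbf{c}};q)$ and $\rho(\overleftarrow{\mathbf{d}},\mathbf{c};q)-\rho(\overleftarrow{\mathbf{c}},\mathbf{d};q)$, both of which vanish identically by Lemma \ref{LT}, and what survives is exactly the identity
$\rho(\mathbf{s};u)+\rho(\mathbf{t};u)=\tfrac{1}{2}\bigl(\rho(\mathbf{c};q)+\rho(\mathbf{d};q)\bigr)\bigl(\rho(\mathbf{a};r)+\rho(\mathbf{b};r)\bigr)+\tfrac{1}{2}\bigl(\rho(\mathbf{c};q+1)+\rho(\mathbf{d};q+1)\bigr)\bigl(\rho(\mathbf{a};N-r)+\rho(\mathbf{b};N-r)\bigr)$ that the paper quotes inside Theorem \ref{main-thm}. (Your endpoint claim that $|q+1|\le Z$ throughout $1\le u\le NZ$ fails at $u=NZ$, but there $r=0$ and the second inner factor is $\rho(\cdot;-N)=0$, so this is only an imprecision.)

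The genuine gap is in C2. Expanding $\rho(\mathbf{s},\mathbf{t};u)+\rho(\mathbf{t},\mathbf{s};u)$ the same way, the ``pp'' and ``qq'' pieces do carry your type (ii) factor $\rho(\mathbf{c},\mathbf{d};q)+\rho(\mathbf{d},\mathbf{c};q)$, but the mixed pieces carry the outer factors $\rho(\mathbf{c},\overleftarrow{\mathbf{c}};q)-\rho(\mathbf{d},\overleftarrow{\mathbf{d}};q)$ and $\rho(\overleftarrow{\mathbf{c}},\mathbf{c};q)-\rho(\overleftarrow{\mathbf{d}},\mathbf{d};q)$, multiplied by $\rho(\mathbf{p},\mathbf{q};\cdot)$ and $\rho(\mathbf{q},\mathbf{p};\cdot)$, which for a GCP do \emph{not} vanish. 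These outer factors are none of your types (i)--(iii): Lemma \ref{LT}(3) only equates $\rho(\mathbf{c},\overleftarrow{\mathbf{d}};q)$ with $\rho(\mathbf{d},\overleftarrow{\mathbf{c}};q)$, and the difference $\rho(\mathbf{c},\overleftarrow{\mathbf{c}};q)-\rho(\mathbf{d},\overleftarrow{\mathbf{d}};q)$ is not identically zero (for the $(6,2)$-CZCP of Table 1 it equals $-4$ at $q=3$). It does vanish for $q\ge M-Z$, but only because of the binary structural property of Lemma \ref{Cross-condition1}: for such $q$ every index occurring in $\rho(\mathbf{c},\overleftarrow{\mathbf{c}};q)$ lies in $\{0,\dots,Z-1\}$ where $c_i=\tfrac{c_0}{d_0}d_i$, so each product $c_ic_{M-1-q-i}$ equals $d_id_{M-1-q-i}$ (and symmetrically for the reversed version using indices in $\{M-Z,\dots,M-1\}$). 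Your plan never invokes this ingredient, so as written it cannot close C2. A cleaner way to finish, and the one this paper itself uses in Step 1 of Theorem \ref{main-thm}, is to avoid the correlation expansion for C2 altogether: write $\mathbf{s}$ and $\mathbf{t}$ blockwise as $\mathbf{s}_i=\tfrac{c_i+d_{M-1-i}}{2}\mathbf{a}+\tfrac{c_i-d_{M-1-i}}{2}\mathbf{b}$ and $\mathbf{t}_i=\tfrac{d_i-c_{M-1-i}}{2}\mathbf{a}+\tfrac{d_i+c_{M-1-i}}{2}\mathbf{b}$, deduce from Lemma \ref{Cross-condition1} that $\mathbf{s}_k=\tfrac{c_0}{d_0}\mathbf{t}_k$ and $\mathbf{s}_{M-1-k}=-\tfrac{c_0}{d_0}\mathbf{t}_{M-1-k}$ for $0\le k\le Z-1$, and then apply Lemma \ref{Cross-condition2} to the length-$MN$ pair to obtain C2 on the whole tail zone of width $NZ$.
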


\begin{Lemma}\label{Lc}
Let $({\bf c},{\bf d})$ be a binary optimal $(M,\frac{M}{2}-1)-\mathrm{CZCP}$, where $M$ is not a Golay number.
If
\begin{equation}\label{optimal-con}
\left(c_{\frac{M}{2}-1}-\frac{d_0}{c_0}d_{\frac{M}{2}-1}\right)\left(c_{\frac{M}{2}}+\frac{d_0}{c_0}d_{\frac{M}{2}}\right)=0,
\end{equation}
then the distribution of the absolute values of the AACSs is
\begin{equation}\label{Eq1-1}
\left(\left|\rho(\mathbf{c};u)+\rho(\mathbf{d};u)\right|\right)_{u=0}^{M-1}=\left(2M,{\bf{0}}_{\frac{M}{2}-1},2,{\bf{0}}_{\frac{M}{2}-1}\right).
\end{equation}
\end{Lemma}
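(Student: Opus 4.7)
The plan is to reduce the computation of the AACS sum $S_u:=\rho(\mathbf{c};u)+\rho(\mathbf{d};u)$ to the single shift $u=M/2$, which is the only value of $|u|\in\{1,\ldots,M-1\}$ not already forced to $0$ by axiom \textrm{C1} of the $(M,M/2-1)$-CZCP in Definition \ref{CZCP}. For $u=0$ the sum equals $2M$ by inspection, and for $|u|\in\mathcal{T}_1\cup\mathcal{T}_2=\{1,\ldots,M/2-1\}\cup\{M/2+1,\ldots,M-1\}$ the sum vanishes by C1, so the task reduces to showing $|S_{M/2}|=2$.

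The next step is to exploit the constraint on the heads and tails of $(\mathbf{c},\mathbf{d})$. Setting $\epsilon:=c_0/d_0\in\{-1,+1\}$, Lemma \ref{Cross-condition1} with $Z=M/2-1$ pins down $c_i=\epsilon d_i$ for $0\le i\le M/2-2$ and $c_j=-\epsilon d_j$ for $M/2+1\le j\le M-1$, leaving only the two central coordinates $M/2-1$ and $M/2$ unconstrained. Expanding
\begin{equation*}
S_{M/2}=\sum_{i=0}^{M/2-1}\bigl(c_ic_{i+M/2}+d_id_{i+M/2}\bigr),
\end{equation*}
the terms with $1\le i\le M/2-2$ each have one factor drawn from the head range and one from the tail range, so $\epsilon^2=1$ makes them cancel in pairs. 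What remains are the two boundary terms
\begin{equation*}
T_0=d_0\bigl(\epsilon c_{M/2}+d_{M/2}\bigr),\qquad T_{M/2-1}=d_{M-1}\bigl(-\epsilon c_{M/2-1}+d_{M/2-1}\bigr),
\end{equation*}
each of which lies in $\{-2,0,2\}$. Hypothesis \eqref{optimal-con} can be rewritten as ``$T_0=0$ or $T_{M/2-1}=0$,'' whence $S_{M/2}\in\{-2,0,2\}$.

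The remaining obstacle, which I regard as the real content of the lemma, is to exclude $S_{M/2}=0$, and this is where the assumption that $M$ is not a Golay number is essential. If $S_{M/2}$ were zero, then combining with the vanishing at every other nonzero shift, together with the symmetry $\rho(\mathbf{a};-u)=\rho(\mathbf{a};u)$ for binary sequences, would upgrade $(\mathbf{c},\mathbf{d})$ to a genuine GCP of length $M$, contradicting the hypothesis. Hence $|S_{M/2}|=2$, and juxtaposing the three regimes ($u=0$, $0<|u|\ne M/2$, $|u|=M/2$) yields the asserted distribution \eqref{Eq1-1}.
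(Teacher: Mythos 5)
Your proposal is correct and follows essentially the same route as the paper's proof: reduce to the single shift $u=M/2$, use Lemma \ref{Cross-condition1} to cancel the terms with $1\le i\le M/2-2$, isolate the two boundary terms, kill one of them via hypothesis \eqref{optimal-con}, and rule out the value $0$ by the same ``otherwise $(\mathbf{c},\mathbf{d})$ would be a GCP, making $M$ a Golay number'' contradiction. The only difference is cosmetic (you factor the boundary terms through $d_0$ and $d_{M-1}$ rather than $c_0$ and $c_{M-1}$, and you exclude $S_{M/2}=0$ directly instead of first noting the two factors cannot vanish simultaneously), so no further comment is needed.
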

\begin{proof}
The optimality of the CZCP implies that we only need to compute the value of the AACS of $({\bf c},{\bf d})$ at the time shift $u=M/2$, i.e.,
\begin{align*}
\rho(\mathbf{c};\frac{M}{2})+\rho(\mathbf{d};\frac{M}{2})&=\sum\limits_{i=0}^{\frac{M}{2}-1}\left(c_ic_{i+\frac{M}{2}}+d_id_{i+\frac{M}{2}}\right)\\
&=\sum\limits_{i=1}^{\frac{M}{2}-2}\left(c_ic_{i+\frac{M}{2}}+d_id_{i+\frac{M}{2}}\right)+\left(c_0c_\frac{M}{2}+d_0d_\frac{M}{2}+c_{\frac{M}{2}-1}c_{M-1}+d_{\frac{M}{2}-1}d_{M-1}\right)\\
&=c_0c_\frac{M}{2}+d_0d_\frac{M}{2}+c_{\frac{M}{2}-1}c_{M-1}+d_{\frac{M}{2}-1}d_{M-1}\\
&=c_0\left(c_\frac{M}{2}+\frac{d_0}{c_0}d_\frac{M}{2}\right)+c_{M-1}\left(c_{\frac{M}{2}-1}-\frac{d_0}{c_0}d_{\frac{M}{2}-1}\right),
\end{align*}
where that last two equations are obtained by Lemma \ref{Cross-condition1} since $\frac{d_i}{c_i}=-\frac{d_{M-1-i}}{c_{M-1-i}}$ for any $0\leq i\leq \frac{M}{2}-2$.

Note that the sequences are binary. This implies 
\[c_\frac{M}{2}+\frac{d_0}{c_0}d_\frac{M}{2},~c_{\frac{M}{2}-1}-\frac{d_0}{c_0}d_{\frac{M}{2}-1}\in\{0,2,-2\}\]
which can not be zero at the same time, otherwise the CZCP is a GCP and $M$ is a Golay number, which is a contradiction to the supposition.
Thus, if \begin{equation*}
\left(c_{\frac{M}{2}-1}-\frac{d_0}{c_0}d_{\frac{M}{2}-1}\right)\left(c_{\frac{M}{2}}+\frac{d_0}{c_0}d_{\frac{M}{2}}\right)=0,
\end{equation*}
then $\rho(\mathbf{c};\frac{M}{2})+\rho(\mathbf{d};\frac{M}{2})=\pm2$. The proof is now completed.
\end{proof}

\begin{Remark}
By computer search, we obtain optimal $(M,\frac{M}{2}-1)$-CZCPs satisfying Lemma \ref{Lc} of lengths 6, 12, 24 and 28, which are listed in Table \ref{table-seed}. We should emphasis that the optimal $(28,13)$-CZCP has never been reported in the literature before, and although other CZCPs of the same parameters had already been appeared in \cite{Liu-2020} and \cite{Adhikary-2020}, our CZCPs have better AACSs and ACCSs distributions which are also listed in Table \ref{table-seed}: For example, for CZCPs with  lengths $6$ and $12$, the AACSs at the time shift $\frac{M}{2}$ of our CZCPs are $\pm2$, whereas the same values of those CZCPs in \cite{Liu-2020} and \cite{Adhikary-2020} are $\pm4$; For CZCPs with length $24$, the maximum absolute value of ACCSs of our CZCP is 4, whereas the same value of that CZCP in \cite{Liu-2020} is 24.

\begin{table*}[htbp]\label{table-seed}
\begin{center}\scriptsize
\textbf{Table 1}~~Optimal CZCPs of lengths 6,\,12,\,24\,and\, 28\\
\resizebox{\textwidth}{20mm}{
\begin{tabular}{c|c|c|c}
  \hline\hline
$K_{M}$&$(M,Z)$  &$\left(
                  \begin{array}{c}
                    \mathbf{c} \\
                    \mathbf{d}\\
                  \end{array}
                \right)
$  &$\left(
                  \begin{array}{c}
                    \rho(\mathbf{c})(u)+\rho(\mathbf{d})(u) \\
                    \rho(\mathbf{c},\mathbf{d})(u)+\rho(\mathbf{d},\mathbf{c})(u)\\
                  \end{array}
                \right)_{u=0}^{M-1}
$ \\
 \hline
 $K_{6}$&$(6,2)$&$\left(
                  \begin{array}{c}
                    +----+\\
                    +-+++-\\
                  \end{array}
                \right)$
                &$\left(
                  \begin{array}{c}
                    12,\mathbf{0}_{2},-2,\mathbf{0}_{2}\\
                    -4,-4,0,2,\mathbf{0}_{2}\\
                  \end{array}
                \right)$\\
\hline
 $K_{12}$&$(12,5)$&$\left(
                  \begin{array}{c}
                    +++-++++--+-\\
                    +++-+---++-+\\
                  \end{array}
                \right)$
                &$\left(
                  \begin{array}{c}
                    24,\mathbf{0}_{5},-2,\mathbf{0}_{5}\\
                    -4,0,4,0,4,0,2,\mathbf{0}_{5}\\
                  \end{array}
                \right)$\\
\hline
                $K_{24}$&$(24,11)$& $\left(
                  \begin{array}{c}
                   +-++-+++--------++--+-+-\\
                   +-++-+++---+++++--++-+-+\\
                  \end{array}
                \right)$
                &$\left(
                  \begin{array}{c}
                    48,\mathbf{0}_{11},2,\mathbf{0}_{11} \\
                    -4,0,-4,0,-4,0,-4,0,-4,0,-4,0,-2,\mathbf{0}_{11}\\
                  \end{array}
                \right)$\\
\hline
$K_{28}$&$(28,13)$& $\left(
                  \begin{array}{c}
                    ++-+-++-----+----+--++---+-+\\
                    ++-+-++-----+++++-++--+++-+-\\
                  \end{array}
                \right)$
                &$\left(
                  \begin{array}{c}
                    56,\mathbf{0}_{13},-2,\mathbf{0}_{13} \\
                    -4,0,4,0,-12,0,4,0,-12,0,-12,0,4,0,2,\mathbf{0}_{13}\\
                  \end{array}
                \right)$\\
                \hline
\end{tabular}}
\end{center}
\end{table*}	
\end{Remark}


Now we are ready to present the main construction of CZCPs.

\begin{Theorem}\label{main-thm}
Let $\mathcal{A}\triangleq ({\bf a},{\bf b})$ be a binary $(N,Z)$-$\mathrm{CZCP}$, which is also a GCP. Let $\mathcal{B}\triangleq ({\bf c},{\bf d})$ be a binary optimal $(M,\frac{M}{2}-1)$-$\mathrm{CZCP}$, where $M$ is  not a Golay number. If
 \begin{equation}\label{thm-equ}
  \left(\frac{a_0}{b_0}+1\right)\left(c_{\frac{M}{2}-1}-\frac{c_0}{d_0}d_{\frac{M}{2}-1}\right)+\left(\frac{a_0}{b_0}-1\right)\left(c_{\frac{M}{2}}+\frac{c_0}{d_0}d_{\frac{M}{2}}\right)=0,
 \end{equation}
then  $({\bf s},{\bf t})\triangleq \mathrm{Turyn}(\mathcal{A},\mathcal{B})$ is a binary $(MN,(\frac{M}{2}-1)N+Z)$-$\mathrm{CZCP}$.

Especially, the distribution of the absolute values of the AACSs is
\begin{equation*}
(\left|\rho(\mathbf{s};u)+\rho(\mathbf{t};u)\right|)_{u=0}^{MN-1}= (2MN,{\bf 0}_{\frac{MN}{2}-1},2N,{\bf 0}_{\frac{MN}{2}-1}).
\end{equation*}
\begin{proof}
Define ${\bf s}_i=(s_{i,0},s_{i,1},\cdots,s_{i,N-1})$, ${\bf t}_i=(t_{i,0},t_{i,1},\cdots,t_{i,N-1})$ for any $i=0,1,\ldots,M-1$, as
\begin{eqnarray}\label{E3}
\left\{
\begin{array}{cc}
\mathbf{s}_i=\frac{c_i+d_{M-1-i}}{2}{\bf a}+\frac{c_i-d_{M-1-i}}{2}{\bf b}; \\
\mathbf{t}_i=\frac{d_i-c_{M-1-i}}{2}{\bf a}+\frac{d_i+c_{M-1-i}}{2}{\bf b}.
\end{array}
\right.
\end{eqnarray}
Then  ${\bf s}=({\bf s}_0,{\bf s}_1,\ldots,{\bf s}_{M-1})$ and ${\bf t}=({\bf t}_0,{\bf t}_1,\ldots,{\bf t}_{M-1})$ by Lemma \ref{L_2}.

By Lemma \ref{L_3}, $({\bf s},{\bf t})$ is an $(MN,(\frac{M}{2}-1)N)$-CZCP. We want to extend the CZC width to $(\frac{M}{2}-1)N+Z$, which can be achieved in the following  two steps:

{\bf Step 1}: Cross-correlation property.

By Lemma \ref{Cross-condition1}, the structures of $({\bf a},{\bf b})$ and $({\bf c},{\bf d})$ satisfy that
\[a_j=\frac{a_0}{b_0}b_j, ~a_{N-1-j}=-\frac{a_0}{b_0}b_{N-1-j}~~{\text for~any~} 0\leq j\leq Z-1,
\]
and
\[c_i=\frac{c_0}{d_0}d_i, ~d_{M-1-i}=-\frac{c_0}{d_0}c_{M-1-i}~~{\text for~any~} 0\leq i\leq \frac{M}{2}-2.
\]
Thus from (\ref{E3}), we can easily obtain that
\[{\bf s}_k=\frac{c_0}{d_0}{\bf t}_k,~~{\bf s}_{M-1-k}=-\frac{c_0}{d_0}{\bf t}_{M-1-k}~~{\text for~any}~0\leq k\leq \frac{M}{2}-2.
\]
We also obtain that for any $0\leq j\leq Z-1$,
\begin{eqnarray}\label{thm-E2}
\left\{
\begin{array}{cc}
s_{\frac{M}{2}-1,j}=\frac{c_{\frac{M}{2}-1}+d_\frac{M}{2}}{2}a_j+\frac{c_{\frac{M}{2}-1}-d_\frac{M}{2}}{2}b_j =\frac{a_j}{2}\left[c_{\frac{M}{2}-1}\left(\frac{a_0}{b_0}+1\right)+d_\frac{M}{2}\left(1-\frac{a_0}{b_0}\right)\right]\\
t_{\frac{M}{2}-1,j}=\frac{d_{\frac{M}{2}-1}-c_\frac{M}{2}}{2}a_j+\frac{d_{\frac{M}{2}-1}+c_\frac{M}{2}}{2}b_j=\frac{a_j}{2}\left[d_{\frac{M}{2}-1}\left(\frac{a_0}{b_0}+1\right)-c_\frac{M}{2}\left(1-\frac{a_0}{b_0}\right)\right]
\end{array}
\right.
\end{eqnarray}
and
\begin{eqnarray}\label{thm-E3}
\left\{
\begin{array}{cc}
s_{\frac{M}{2},N-1-j}=\frac{c_{\frac{M}{2}}+d_{\frac{M}{2}-1}}{2}a_{N-1-j}+\frac{c_{\frac{M}{2}}-d_{\frac{M}{2}-1}}{2}b_{N-1-j} =\frac{a_{N-1-j}}{2}\left[c_{\frac{M}{2}}\left(1-\frac{a_0}{b_0}\right)+d_{\frac{M}{2}-1}\left(\frac{a_0}{b_0}+1\right)\right]\\
t_{\frac{M}{2},N-1-j}=\frac{d_{\frac{M}{2}}-c_{\frac{M}{2}-1}}{2}a_{N-1-j}+\frac{d_{\frac{M}{2}}+c_{\frac{M}{2}-1}}{2}b_{N-1-j}=\frac{a_{N-1-j}}{2}\left[d_{\frac{M}{2}}\left(1-\frac{a_0}{b_0}\right)-c_{\frac{M}{2}-1}\left(\frac{a_0}{b_0}+1\right)\right].
\end{array}
\right.
\end{eqnarray}
Thus if (\ref{thm-equ}) is satisfied, we have
\[s_{\frac{M}{2}-1,j}=\frac{c_0}{d_0}t_{\frac{M}{2}-1,j}~~{\text and}~~s_{\frac{M}{2},N-1-j}=-\frac{c_0}{d_0}t_{\frac{M}{2},N-1-j}~~{\text for~any}~0\leq j\leq Z-1,
\]
which implies that by Lemma \ref{Cross-condition2},
\[
  \rho({\bf s},{\bf t};u)+\rho({\bf t},{\bf s};u)=0, {\text for~any}~ |u|\in\left\{\left(\frac{M}{2}+1\right)N-Z,\cdots,MN-1\right\}.
\]

{\bf Step 2}: Autocorrelation property.

Set $u=k_1N+k_2$ with $0\leq k_1\leq M-1$ and $0\leq k_2\leq N-1$. Then
by Lemma \ref{L_2}, we only need to prove that
\[\rho({\bf s};u)+\rho({\bf t};u)=0
\]
for any $k_1=\frac{M}{2}-1,~1\leq k_2\leq Z$ and $k_1=\frac{M}{2},~N-Z\leq k_2\leq N-1$.
By Definition \ref{AACF}, Lemma \ref{LT} and some elementary operations (the exact computation is tedious and already been appeared in the proof of Lemma \ref{L_3} in \cite{Adhikary-2020}, so we omit it here), we have
\begin{align*}
\rho(\mathbf{s};u)+\rho(\mathbf{t};u)
&=\frac{1}{2}\Big(\rho({\bf c};k_1)+\rho({\bf d};k_1)\Big)\Big(\rho({\bf a};k_2)+\rho({\bf b};k_2)\Big)\\
&+\frac{1}{2}\Big(\rho({\bf c};k_1+1)+\rho({\bf d};k_1+1)\Big)\Big(\rho({\bf a};N-k_2)+\rho({\bf b};N-k_2)\Big),
\end{align*}
which is equal to 0 for any $k_2\neq0$, since $({\bf a},{\bf b})$ is a GCP. Therefore in fact we obtain
\begin{center}
$\rho({\bf s};u)+\rho({\bf t};u)=0$ for any $u\neq 0,\frac{MN}{2}$.	
\end{center}
Note that (\ref{optimal-con}) can be obtained from (\ref{thm-equ}), thus by Lemma \ref{Lc}, we have
\begin{center}
	$\left|\rho({\bf s};\frac{MN}{2})+\rho({\bf t};\frac{MN}{2})\right|=\frac{1}{2}\left|\rho({\bf c};\frac{M}{2})+\rho({\bf d};\frac{M}{2})\right|\times 2N=2N.$
\end{center}

The proof is now completed according to the above two steps.
\end{proof}
\end{Theorem}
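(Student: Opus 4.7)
The plan is to verify the two CZCP conditions C1 and C2 for $(\mathbf{s},\mathbf{t})=\mathrm{Turyn}(\mathcal{A},\mathcal{B})$ with the extended ZCZ width $Z^{\ast}:=(\tfrac{M}{2}-1)N+Z$. Lemma~\ref{L_3} already delivers an $(MN,(\tfrac{M}{2}-1)N)$-CZCP, so the task reduces to pushing both the tail-end ZCCZ and the two ZACZs by $Z$ further shifts and to pinning down the AACS value at the surviving shift $MN/2$.

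\textbf{Step 1 (cross-correlation, extending C2 to width $Z^{\ast}$).} I would work blockwise with $\mathbf{s}=(\mathbf{s}_0,\ldots,\mathbf{s}_{M-1})$ and $\mathbf{t}=(\mathbf{t}_0,\ldots,\mathbf{t}_{M-1})$ of length-$N$ blocks
\[\mathbf{s}_i=\tfrac{c_i+d_{M-1-i}}{2}\mathbf{a}+\tfrac{c_i-d_{M-1-i}}{2}\mathbf{b},\qquad \mathbf{t}_i=\tfrac{d_i-c_{M-1-i}}{2}\mathbf{a}+\tfrac{d_i+c_{M-1-i}}{2}\mathbf{b}.\]
Applying Lemma~\ref{Cross-condition1} to the optimal $(M,\tfrac{M}{2}-1)$-CZCP $(\mathbf{c},\mathbf{d})$ gives $c_i=\tfrac{c_0}{d_0}d_i$ and $c_{M-1-i}=-\tfrac{c_0}{d_0}d_{M-1-i}$ for $0\le i\le \tfrac{M}{2}-2$, which substituted into the formulas above immediately yields $\mathbf{s}_k=\tfrac{c_0}{d_0}\mathbf{t}_k$ and $\mathbf{s}_{M-1-k}=-\tfrac{c_0}{d_0}\mathbf{t}_{M-1-k}$ for $0\le k\le \tfrac{M}{2}-2$. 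The two remaining ``middle'' blocks $k=\tfrac{M}{2}-1$ and $k=\tfrac{M}{2}$ are precisely where the $(\mathbf{c},\mathbf{d})$ structural lemma is silent, so I would additionally plug in the CZCP structure of $(\mathbf{a},\mathbf{b})$ (i.e.\ $a_j=\tfrac{a_0}{b_0}b_j$ for $0\le j\le Z-1$ and its mirror on the last $Z$ positions), factor out $a_j$ and $a_{N-1-j}$, and observe that hypothesis (\ref{thm-equ}) is the exact linear identity forcing $s_{M/2-1,j}=\tfrac{c_0}{d_0}t_{M/2-1,j}$ on the first $Z$ coordinates of block $\tfrac{M}{2}-1$ and $s_{M/2,N-1-j}=-\tfrac{c_0}{d_0}t_{M/2,N-1-j}$ on the last $Z$ coordinates of block $\tfrac{M}{2}$. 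Concatenating all block identities produces the componentwise relations $s_{\ell}=\tfrac{c_0}{d_0}t_{\ell}$ and $s_{MN-1-\ell}=-\tfrac{c_0}{d_0}t_{MN-1-\ell}$ for $0\le \ell\le Z^{\ast}-1$, so Lemma~\ref{Cross-condition2} upgrades the tail-end ZCCZ to width $Z^{\ast}$.

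\textbf{Step 2 (autocorrelation and the AACS distribution).} Writing $u=k_1N+k_2$ with $0\le k_1\le M-1$ and $0\le k_2\le N-1$, the convolution-type identity implicit in the proof of Lemma~\ref{L_3} yields
\begin{align*}
\rho(\mathbf{s};u)+\rho(\mathbf{t};u) &= \tfrac12\bigl(\rho(\mathbf{c};k_1)+\rho(\mathbf{d};k_1)\bigr)\bigl(\rho(\mathbf{a};k_2)+\rho(\mathbf{b};k_2)\bigr) \\
&\quad + \tfrac12\bigl(\rho(\mathbf{c};k_1+1)+\rho(\mathbf{d};k_1+1)\bigr)\bigl(\rho(\mathbf{a};N-k_2)+\rho(\mathbf{b};N-k_2)\bigr).
\end{align*}
Since $(\mathbf{a},\mathbf{b})$ is a GCP, every factor on $\mathbf{a},\mathbf{b}$ vanishes for $k_2\neq 0$, so the sum is identically zero away from the block-boundary shifts $k_2=0$. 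For $k_2=0$ the expression collapses to $N(\rho(\mathbf{c};k_1)+\rho(\mathbf{d};k_1))$, which is zero for every $1\le k_1\le M-1$ with $k_1\ne M/2$ by optimality of $(\mathbf{c},\mathbf{d})$. Hence $\rho(\mathbf{s};u)+\rho(\mathbf{t};u)=0$ for all $u\notin\{0,MN/2\}$; this certainly covers C1 on the two required intervals (since $MN/2$ lies strictly outside them). At $u=MN/2$, a short case analysis on $a_0/b_0\in\{\pm 1\}$ shows that (\ref{thm-equ}) forces hypothesis (\ref{optimal-con}) of Lemma~\ref{Lc}, so $|\rho(\mathbf{c};M/2)+\rho(\mathbf{d};M/2)|=2$, giving $|\rho(\mathbf{s};MN/2)+\rho(\mathbf{t};MN/2)|=2N$ and the claimed distribution $(2MN,\mathbf{0}_{MN/2-1},2N,\mathbf{0}_{MN/2-1})$.

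\textbf{Main obstacle.} The delicate content sits in Step~1: Lemma~\ref{Cross-condition1} controls only the outer $M-2$ blocks of $(\mathbf{s},\mathbf{t})$, and the naive Turyn argument behind Lemma~\ref{L_3} stalls at ZCZ width $(\tfrac{M}{2}-1)N$ for exactly this reason. The algebraic heart of the theorem is that (\ref{thm-equ}) is precisely the linear condition on $(c_{M/2-1},c_{M/2},d_{M/2-1},d_{M/2},a_0/b_0)$ needed to propagate the antisymmetric cross-correlation structure through the two unconstrained middle blocks and gain the extra $Z$ positions on each end; recognising and exploiting this compatibility is the only non-routine step of the argument.
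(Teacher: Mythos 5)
Your proposal is correct and follows essentially the same route as the paper's own proof: the same Turyn block decomposition, the same use of Lemma~\ref{Cross-condition1} on the outer $M-2$ blocks plus condition (\ref{thm-equ}) to handle the two middle blocks before invoking Lemma~\ref{Cross-condition2}, and the same convolution identity together with Lemma~\ref{Lc} for the autocorrelation sum at shift $MN/2$. No gaps.
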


\begin{Remark}
1) By the equivalent operations for binary sequence pairs, $({\bf a},-{\bf b})$ for example, we can fix $a_0=-b_0$ for the GCP in Theorem \ref{main-thm}. Now condition (\ref{thm-equ}) is reduced to $c_{\frac{M}{2}}+\frac{c_0}{d_0}d_{\frac{M}{2}}=0$. Moreover  $c_{\frac{M}{2}-1}-\frac{c_0}{d_0}d_{\frac{M}{2}-1}\neq0$, otherwise the CZCP is a GCP and $M$ is a Golay number. Therefore by Lemma \ref{Cross-condition1}, the structure of $({\bf c},{\bf d})$ becomes
 $({\bf c},\frac{c_0}{d_0}{\bf d}')$ with
$$\left(
                  \begin{array}{c}
                    \mathbf{c} \\
                    \mathbf{d}'\\
                  \end{array}
                \right)=\left(
                  \begin{array}{cccccccc}
                    c_0,&\ldots,&c_{\frac{M}{2}-2},&c_{\frac{M}{2}-1},& c_{\frac{M}{2}},&c_{\frac{M}{2}+1},&\cdots,&c_{M-1}\\
                   c_0,&\ldots,&c_{\frac{M}{2}-2},&-c_{\frac{M}{2}-1},&-c_{\frac{M}{2}},&-c_{\frac{M}{2}+1},&\cdots,&-c_{M-1} \\
                  \end{array}
                \right).$$
All the optimal CZCPs in Table 1 satisfy this condition.

2) Since any GCP is some $(N,Z)$-CZCP where $Z\geq1$ is a number related with $N$ \cite{Adhikary-2020}, thus the resultant CZCPs obtained in Theorem \ref{main-thm} have larger CZCZ width than those CZCPs obtained only by Lemma \ref{L_3}. Therefore, we can adopt different GCPs to obtain many new CZCPs.
\end{Remark}

\begin{Lemma}\label{SGCP}(\cite{Liu-2020,Fan-2020})
There exists a binary $(N,Z)$-CZCP, which is also a GCP, for the following parameters:
\begin{enumerate}
	\item $N=2^{\alpha+1}10^\beta26^\gamma$, $Z=\frac{N}{2}$ with $\alpha,\beta,\gamma\geq0$;
	\item $N=10^{\beta+1},~Z=\frac{2N}{5}$ with $\beta\geq0$;
	\item $N=26^{\gamma+1},~Z=\frac{6N}{13}$ with $\gamma\geq0$;
	\item $N=10^\beta26^{\gamma+1},~Z=\frac{6N}{13}$ with $\beta,\gamma\geq0$.
\end{enumerate}
\end{Lemma}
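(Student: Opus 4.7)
The plan is to treat this lemma as an assembly of known existence results and construct each of the four families by exhibiting minimal seed pairs and then bootstrapping through Turyn's method. For every item I would first exhibit a base GCP at a small length that additionally has the asserted ZCZ width, and then extend to arbitrarily long members of the family via Lemma \ref{L_2} (to preserve the GCP property) combined with Lemma \ref{L_3} (to preserve the CZC width proportionally).

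For item (1), I would take the trivial pair $(+,+),(+,-)$ of length $2$ as a perfect $(2,1)$-CZCP, together with the length-$10$ and length-$26$ perfect CZCPs exhibited in \cite{Liu-2020}. Feeding a perfect $(M,M/2)$-CZCP as $\mathcal{B}$ into Lemma \ref{L_3}, with any binary GCP $\mathcal{A}$ of length $N$ as the first input, yields an $(MN,MN/2)$-CZCP, i.e.\ again a perfect CZCP, while Lemma \ref{L_2} simultaneously guarantees that the Turyn output is itself a GCP. Iterating this with factors $2$, $10$, and $26$ in either slot realizes the full family $N=2^{\alpha+1}10^{\beta}26^{\gamma}$ with $Z=N/2$.

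For items (2)–(4), the seeds are specific length-$10$ and length-$26$ GCPs which are \emph{not} perfect CZCPs but which happen to be $(10,4)$- and $(26,12)$-CZCPs, realizing the ratios $2/5$ and $6/13$ respectively. These are the base pairs isolated in \cite{Fan-2020}. Taking such a seed as $\mathcal{B}$ and a binary GCP of Golay length $N_1$ as $\mathcal{A}$ in Lemma \ref{L_3} produces a $(10N_1,4N_1)$- or $(26N_1,12N_1)$-CZCP which is moreover a GCP by Lemma \ref{L_2}. Letting $N_1$ range over the Golay numbers $2^{\alpha}10^{\beta}$, $2^{\alpha}26^{\gamma}$, or $2^{\alpha}10^{\beta}26^{\gamma}$ respectively then yields all the asserted parameters with the CZC ratio preserved.

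The main obstacle is the verification at the seeds: there is no way to avoid an explicit correlation calculation on the length-$10$ and length-$26$ base GCPs to confirm both that their AACSs vanish on the required zones and that the tail structure demanded by Lemma \ref{Cross-condition1} is present. In practice this is handled either by a direct computation on the exhibited pairs or by invoking the structural analysis of GCPs in \cite{R. Adhikary-2020} to pinpoint which families of base GCPs automatically satisfy the CZCP constraints. Once the seeds are in hand, the Turyn extension is purely mechanical and the claimed ratios propagate by induction.
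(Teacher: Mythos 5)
This lemma is quoted from \cite{Liu-2020} and \cite{Fan-2020} with no proof given in the paper, so your reconstruction can only be measured against the cited constructions; the Turyn-composition skeleton you use is indeed how those results are obtained, and items (2)--(4) of your argument are sound: taking the $(10,4)$ and $(26,12)$ seed pairs that are simultaneously GCPs (the former appears explicitly in Example~\ref{Ex1}; both come from the structural analysis of \cite{R. Adhikary-2020} as used in \cite{Fan-2020}) as $\mathcal{B}$ in Lemma~\ref{L_3} with $\mathcal{A}$ a GCP of length $10^{\beta}$, $26^{\gamma}$ or $10^{\beta}26^{\gamma}$ gives exactly the stated parameters, and Lemma~\ref{L_2} preserves the GCP property. (For item (2) you must restrict the first factor to $10^{\beta}$ rather than $2^{\alpha}10^{\beta}$ if you want precisely $N=10^{\beta+1}$.)

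The one genuine error is in your seed list for item (1): there are no binary perfect CZCPs of lengths $10$ and $26$. Perfect binary CZCPs occur only at lengths of the form $2^{\alpha+1}10^{\beta}26^{\gamma}$, and neither $10$ nor $26$ is of that form, so the objects you propose to ``take from \cite{Liu-2020}'' do not exist. Relatedly, ``in either slot'' is not legitimate: Lemma~\ref{L_3} is asymmetric and requires the CZCP to be the \emph{second} argument $\mathcal{B}$. Both defects are repairable without changing your strategy, because the factors $10$ and $26$ never need to enter as CZCPs at all: a single application of Lemmas~\ref{L_2} and~\ref{L_3} with $\mathcal{A}$ an arbitrary GCP of length $2^{\alpha}10^{\beta}26^{\gamma}$ and $\mathcal{B}$ the perfect $(2,1)$-CZCP $\bigl((+,+),(+,-)\bigr)$ already produces a GCP that is a $\bigl(2^{\alpha+1}10^{\beta}26^{\gamma},\,2^{\alpha}10^{\beta}26^{\gamma}\bigr)$-CZCP, which is all of item (1); no iteration and no length-$10$ or length-$26$ perfect seeds are required.
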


According to Theorem \ref{main-thm}, Lemma \ref{SGCP} and the optimal CZCPs in Table \ref{table-seed}, we can immediately obtain the following conclusion.

\begin{Corollary}\label{thm-cor1}
	For $M\in\{6,12,24,28\}$, there exist CZCPs with the following parameters:
	\begin{enumerate}
		\item $(MN,\frac{(M-1)N}{2})$-$CZCPs$, with $N=2^{\alpha+1}10^\beta26^\gamma ~(\alpha,\beta,\gamma\geq0)$;
		\item $(MN,\frac{(5M-6)N}{10})$-$CZCPs$, with $N=10^{\beta+1} ~(\beta\geq0)$;	
		\item $(MN,\frac{(13M-14)N}{26})$-$CZCP$, with $N=26^{\gamma+1} ~(\gamma\geq0)$;
		\item $(MN,\frac{(13M-14)N}{26})$-$CZCP$, with $N=10^\beta26^{\gamma+1} ~(\beta,\gamma\geq0)$.
	\end{enumerate}
Especially, when $N=2$, there exist optimal $(2M,M-1)$-CZCPs which are listed in Table 2.
	\end{Corollary}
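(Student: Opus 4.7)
The statement is a direct consequence of the main theorem combined with the known families of GCPs. My plan is to apply Theorem~\ref{main-thm} four times, once for each family of GCPs in Lemma~\ref{SGCP}, each time using one of the four seed optimal $(M,\frac{M}{2}-1)$-CZCPs $K_M$ tabulated in Table~\ref{table-seed} (with $M\in\{6,12,24,28\}$) as the second input $\mathcal{B}$, and a GCP of length $N$ drawn from the chosen family as the first input $\mathcal{A}$.

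The only nontrivial point is verifying the compatibility condition (\ref{thm-equ}) between $\mathcal{A}$ and $\mathcal{B}$. Following Remark~2(1), I would first normalize $\mathcal{A}=({\bf a},{\bf b})$ via the equivalent operation ${\bf b}\mapsto-{\bf b}$ (which preserves the GCP property) to arrange $a_0=-b_0$. Then $\frac{a_0}{b_0}=-1$ makes the first summand of (\ref{thm-equ}) vanish, and the condition collapses to $c_{M/2}+\frac{c_0}{d_0}d_{M/2}=0$. A direct check of the four sequences $K_M$ in Table~\ref{table-seed} confirms this identity (in each row $c_0=d_0$ and $d_{M/2}=-c_{M/2}$, so the sum is zero). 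Therefore Theorem~\ref{main-thm} applies in all four cases and produces a binary $(MN,(\frac{M}{2}-1)N+Z)$-CZCP.

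Substituting the values $Z=N/2$, $Z=2N/5$, $Z=6N/13$, $Z=6N/13$ from items (1)--(4) of Lemma~\ref{SGCP} into $(\frac{M}{2}-1)N+Z$ and simplifying yields the four claimed parameter sets $\frac{(M-1)N}{2}$, $\frac{(5M-6)N}{10}$, $\frac{(13M-14)N}{26}$, $\frac{(13M-14)N}{26}$, respectively. For the ``especially'' clause, I would instantiate item (1) with $\alpha=\beta=\gamma=0$ (so $N=2$ and the GCP is the trivial length-$2$ Golay pair), which the construction turns into a $(2M,M-1)$-CZCP. To upgrade existence to optimality, one checks that $2M\in\{12,24,48,56\}$ is never of the form $2^\alpha 10^\beta 26^\gamma$ (each such value contains a prime factor $3$ or $7$ absent from $\{2,10,26\}$), so the paper's convention gives $Z_{\max}=M-1$, which matches the achieved ZCZ width.

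The hard work has already been absorbed into Theorem~\ref{main-thm}, so the main obstacle here is simply the compatibility check (\ref{thm-equ}); since the seeds in Table~\ref{table-seed} were explicitly engineered to satisfy $c_{M/2}+(c_0/d_0)d_{M/2}=0$, the corollary becomes a bookkeeping consequence of Theorem~\ref{main-thm} and Lemma~\ref{SGCP}, together with the elementary observation that $12,24,48,56$ are not Golay numbers.
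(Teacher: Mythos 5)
Your proposal is correct and follows essentially the same route as the paper: invoke Theorem~\ref{main-thm} with the Table~1 seeds as $\mathcal{B}$ and the GCP/CZCP families of Lemma~\ref{SGCP} as $\mathcal{A}$, using the normalization $a_0=-b_0$ (as in the paper's remark) so that condition (\ref{thm-equ}) reduces to $c_{M/2}+\frac{c_0}{d_0}d_{M/2}=0$, which the tabulated seeds satisfy, and then do the arithmetic on $(\frac{M}{2}-1)N+Z$; the paper's own one-line proof does the same, simply fixing the length-$2$ GCP $((+,-),(-,-))$ for the $N=2$ case. Your added check that $12,24,48,56$ are not Golay numbers (hence $Z_{\max}=M-1$ and the pairs are optimal) is a correct, slightly more explicit justification of the optimality claim the paper leaves implicit.
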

\proof
The optimal $(2M,M-1)$-CZCPs in Table 2 are obtained by adopting the GCP $({\bf a},{\bf b})=((+,-),(-,-))$ and the seed CZCPs in Table 1 in Theorem \ref{main-thm}.
\qed

\begin{table*}[htbp]
\begin{center}\scriptsize
\textbf{Table 2}~~The optimal CZCPs obtained by seed CZCPs of lengths 6,\,12,\,24\,and\,28\\
\resizebox{\textwidth}{18mm}{
\begin{tabular}{c|c|c}
  \hline\hline
$K_{M}$&$(M,Z)$  &$\left(
                  \begin{array}{c}
                    \mathbf{c} \\
                    \mathbf{d}\\
                  \end{array}
                \right)$ \\
 \hline
 $K_{12}$&$(12,5)$&$\left(
                  \begin{array}{c}
                    --++++++-++-\\
                    --+++-+-+--+\\
                  \end{array}
                \right)$\\
\hline
 $K_{24}$&$(24,11)$&$\left(
                  \begin{array}{c}
                   +---+-++------+--++++-++\\
                   +---+-++---+-+-++----+--\\
                  \end{array}
                \right)$\\
\hline
 $K_{48}$&$(48,23)$&$\left(
                  \begin{array}{c}
                   +--++---+++-----++++++++++-+-+-++-+-++-++-++--++\\
                   +--++---+++-----+++++++-+-+-+-+--+-+--+--+--++--\\
                  \end{array}
                \right)$\\
\hline
$K_{56}$&$(56,27)$& $\left(
                  \begin{array}{c}
                    --+--++-+++----+++++-++++-++++++-+---+-+--+-++-+++--+++-\\
                    --+--++-+++----+++++-++++-+-+---+-+++-+-++-+--+---++---+\\
                  \end{array}
                \right)$\\
   \hline
\end{tabular}
}
\end{center}
\end{table*}
\begin{table*}[htbp]
\begin{center}\scriptsize
\textbf{Table 2(Continue)}~~The optimal CZCPs obtained by seed CZCPs of lengths 6,\,12,\,24\,and\,28\\
\resizebox{\textwidth}{21mm}{
\begin{tabular}{c|c|c}
  \hline\hline
$K_{M}$&$(M,Z)$  
                &$\left(
                  \begin{array}{c}
                    \rho(\mathbf{c})(u)+\rho(\mathbf{d})(u) \\
                    \rho(\mathbf{c},\mathbf{d})(u)+\rho(\mathbf{d},\mathbf{c})(u)\\
                  \end{array}
                \right)_{u=0}^{M-1}
$ \\
 \hline
 $K_{12}$&$(12,5)$
                &$\left(
                  \begin{array}{c}
                    24,\mathbf{0}_{5},-4,\mathbf{0}_{5}\\
                    0,8,0,-4,0,-4,0,\mathbf{0}_{5}\\
                  \end{array}
                \right)$\\
\hline
 $K_{24}$&$(24,11)$
                &$\left(
                  \begin{array}{c}
                    48,\mathbf{0}_{11},-4,\mathbf{0}_{11}\\
                    0,0,0,-4,0,-12,0,20,0,4,0,4,0,\mathbf{0}_{11}\\
                  \end{array}
                \right)$\\
\hline
 $K_{48}$&$(48,23)$
                &$\left(
                  \begin{array}{c}
                   96,\mathbf{0}_{23},4,\mathbf{0}_{23}\\
                   0,40,0,-12,0,-4,0,-12,0,-4,0,-12,0,4,0,-12,0,-4,0,-4,0,4,0,-4,0,\mathbf{0}_{23}\\
                  \end{array}
                \right)$\\
\hline
$K_{56}$&$(56,27)$
                &$\left(
                  \begin{array}{c}
                    112,\mathbf{0}_{27},-4,\mathbf{0}_{27} \\
                    0,16,0,4,0,-12,0,20,0,4,0,12,0,28,0,-20,0,-4,0,-4,0,-4,0,-12,0,-4,0,-4,0,\mathbf{0}_{27}\\
                  \end{array}
                \right)$\\
   \hline
\end{tabular}
}
\end{center}
\end{table*}

\begin{Remark}\label{thm-rem1}
	Note that the optimal $(48,23)$-CZCP and  $(56,27)$-CZCP have never been reported in the literature before. Thus by Lemma \ref{L_3}, we can immediately obtain new $(48N,23N)$-CZCP and $(56N,27N)$-CZCP, where $N$ is a  Golay number.
\end{Remark}

In the end of this section, we illustrate Theorem \ref{main-thm} by the example below.

\begin{Example}\label{Ex1}
Let $({\bf a},{\bf b})$ be a $(10,4)$-CZCP and also a GCP of length 10 as
\begin{eqnarray*}
  {\bf a} &=&(--+-+-++--),\\
  {\bf b} &=&(++-+++++--);
\end{eqnarray*}
and  $({\bf c},{\bf d})$ be the optimal $(6,2)$-CZCP in Table 1. By Turyn's method, we can obtain $({\bf s},{\bf t})$ as follows:
\begin{eqnarray*}
  {\bf s} &=&({\bf e},----++--+---,{\bf f}),\\
  {\bf t} &=&({\bf e},+-++----+-+-,{\bf -f}),
\end{eqnarray*}
where
\begin{eqnarray*}
  {\bf e} &=&(++-+++++----+-----++--+-),\\
  {\bf f} &=&(--++++-+-+--++--+-+-++--).
\end{eqnarray*}
The sums of AACF and ACCF of ${\bf s}$ and ${\bf t}$ are listed, respectively, as follows:
%
\begin{eqnarray*}
(\rho({\bf s};u)+\rho({\bf t};u))_{u=0}^{59}&=&(120,\mathbf{0}_{29},-20,\mathbf{0}_{29}),\\
(\rho({\bf s},{\bf t};u)+\rho({\bf t},{\bf s};u))_{u=0}^{59}&=&(0,16,8,8,8,8,24,8,-8,-20,0,-16,8,-4,8,12,-8,-4,8,\\
&&~4,0,4,8,-4,-8,0,-4,-4,-4,-4,0,0,-4,0,-4,-4,\mathbf{0}_{24}).
\end{eqnarray*}
Hence, $({\bf s},{\bf t})$ is a binary $(60,24)$-$\mathrm{CZCP}$, which coincides with the conclusion of Theorem \ref{main-thm}.
\end{Example}

\section{Conclusions}\label{Sec-Conclusions}

\begin{table}[htbp]
\begin{center}\scriptsize
\textbf{Table 3}~~Summary of ``best" CZCPs \\
\begin{tabular}{c|c|c|c|c}
  \hline\hline
  Ref. &( Length,~CZCZ )  &Constraints &${\mathrm{CZC}}_{\mathrm{ratio}}$ &Remark \\
 \hline
\multirow{2}{*}{\cite{Liu-2020}}&{$(2^{m},2^{m-1})$}& $m\ge 2$ &$1$&GBFs\\
\cline{2-5}
&$(2N,N)$& $N=2^{\alpha}10^{\beta}26^{\gamma}$ &$1$&GCP\\
  \hline
\multirow{6}{*}{\cite{Adhikary-2020}}&$(2^{m-1}+2,2^{m-3}+1)$& $m\ge 4$ &$\frac{1}{2}$&GBFs\\
\cline{2-5}
&$(2\times {10}^{\beta}+2,4\times {10}^{\beta-1}+1)$&$\beta\ge 1$ &$\approx\frac{2}{5}$&\multirow{3}{*}{Insertion function}\\
\cline{2-4}
&$(2\times{26}^{\gamma}+2,12\times {26}^{\gamma-1}+1)$& $\gamma\ge 1$ &$\approx\frac{6}{13}$&\\
\cline{2-4}
&$(2\times{10}^{\beta}{26}^{\gamma}+2,12\times {10}^{\beta}{26}^{\gamma-1}+1)$& $\gamma\ge 1$ &$\approx\frac{6}{13}$&\\
\cline{2-5}
&$(12,5),(24,11)$& $-$ &$1$&Barker sequence\\
\cline{2-5}
&$(12N,5N),(24N,11N)$& $N=2^{\alpha}10^{\beta}26^{\gamma}$&$\approx\frac{5}{6}, \approx\frac{11}{12}$&Kronecker product\\
  \hline
\multirow{3}{*}{\cite{Fan-2020}}
&$({10}^{\beta+1},4\times {10}^{\beta})$& $\beta\ge 0$&$\approx\frac{4}{5}$&\multirow{3}{*}{GCP and Kronecker product}\\
\cline{2-4}
&$({26}^{\gamma+1},12\times {26}^{\gamma})$& $\gamma\ge 0$ &\multirow{2}{*}{$\approx\frac{12}{13}$}&\\
\cline{2-3}
&$({10}^{\beta}{26}^{\gamma+1},12\times {10}^{\beta}{26}^{\gamma})$& $\gamma\ge 0$ &&\\
   \hline
   \cite{Huang-2020}&$(2^{m-1}+2^{v+1},2^{\pi(v+1)-1}+2^v-1)$& $m\ge 4, 0\le v\le m-3$ &$\approx\frac{2}{3}$&BFs\\
 \hline
\multirow{5}{*}{ \cite{yang-2021}}&$(2^{m+2}+2^{m+1},2^{m+1}-1)$&$-$&$\approx\frac{2}{3}$&\multirow{5}{*}{ZCP and concatenation}\\
\cline{2-4}
&$(2^{m+4}+2^{m+3}+2^{m+2},2^{m+3}-1)$&$-$&$\approx\frac{4}{7}$&\\
\cline{2-4}
&$(4N+4,3N/2)$&\multirow{3}{*}{$N=2^{\alpha}10^{\beta}26^{\gamma}$}&$\approx\frac{3}{4}$&\\
\cline{2-2}\cline{4-4}
&$(28N,12N-1)$&&$\approx\frac{6}{7}$&\\
\cline{2-2}\cline{4-4}
&$(24N,10N-1)$& &  $\approx\frac{5}{6}$&\\
 \hline \hline
   $\mbox{Table 1}$&$(M,\frac{M}{2}-1)$&$M\in\{6,12,24,28\}$&$1$&Computer search\\
   \hline
   $\mbox{Theorem~\ref{main-thm}}$&$(MN,(\frac{M}{2}-1)N+Z)$&$N=2^{\alpha}10^{\beta}26^{\gamma}$, $Z\geq1$&$$&\multirow{8}{*}{Kronecker product}\\

\cline{1-4}
\multirow{4}{*}{$\mbox{Corollary~\ref{thm-cor1}}$}&$(MN,\frac{(M-1)N}{2})$&$N=2^{\alpha+1}10^{\beta}26^{\gamma}$&$\approx\frac{M-1}{M}$&\\
\cline{2-4}
&$(MN,\frac{(5M-6)N}{10})$&$N=10^{\beta+1}$&$\approx\frac{5M-6}{5M}$&\\
\cline{2-4}
&\multirow{2}{*}{$(MN,\frac{(13M-14)N}{26})$}&$N=26^{\gamma+1}$&\multirow{2}{*}{$\approx\frac{13M-14}{13M}$}&\\
\cline{3-3}
&&$N=10^{\beta}26^{\gamma+1}$&&\\
\cline{2-4}
&$(48,23),(56,27)$&$-$&$1$&\\
 \cline{1-4}
\multirow{2}{*}{Remark~\ref{thm-rem1}}
&$(48N,23N)$&\multirow{2}{*}{$N=2^{\alpha}10^{\beta}26^{\gamma}$}&$\approx\frac{23}{24}$& \\
\cline{2-2}\cline{4-4}
&$(56N,27N)$&&$\approx\frac{27}{28}$&\\
\hline
\end{tabular}
\end{center}
\end{table}

\begin{table}[htbp]
\begin{center}\scriptsize
\textbf{Table 4}~~Summary of optimal CZCPs with length $N\leq100$ and not a Golay number\\
\setlength{\tabcolsep}{8mm}{
\begin{tabular}{c|c|c|c|c|c|c|c}
  \hline\hline
 ref. & 6 & 12 & 14 & 24 & 28 & 48 & 56  \\
    \hline
  \cite{Liu-2020} & $\surd$ &  $\surd$ & $\surd$&  $\surd$ &  ~& ~& ~ \\
    \hline
  \cite{Adhikary-2020} & ~ &  $\surd$ & ~ &  $\surd$ & ~ & ~ &~ \\
    \hline
 This paper&$\surd$ &$\surd$&~&$\surd$&$\surd$ &$\surd$ & $\surd$\\
 \hline
\end{tabular}
}
\end{center}
\end{table}

In this paper, we proposed a construction framework of binary CZCPs by applying Turyn's method on some seed CZCPs and GCPs (Theorem \ref{main-thm}). By choosing suitably the seed CZCPs, we obtained 16 classes of binary CZCPs with length $MN$, where $M\in\{6,12,24,28\}$ and $N$ is some GCP number. A summary of known binary CZCPs (with possible maximum Z) were listed in Table 3. It is easy to see that our CZCPs have the largest ZCZ ratio
by comparison with other CZCPs with the same lengths known in literature. Especially, if the GCP is strengthened, our resultant CZCPs have the maximum $CZC_{\rm ratio}\approx\frac{M-1}{M}$, till date. Besides, if further let $N=2$ in our construction, the seed CZCPs lead to optimal $(2M,M-1)$-CZCPs with $M\in\{6,12,24,28\}$. We emphasize that the optimal CZCPs with parameters $(28,13)$, $(48,23)$ and $(56,27)$ have never been reported in the literature, and can be extended to new $(28N,13N)$-CZCP, $(48N,23N)$-CZCP and $(56N,27N)$-CZCP via Turyn's method respectively,  where $N$ is any Golay number.

Note that the choice of seed optimal $(M,\frac{M}{2}-1)$-CZCPs in Theorem \ref{main-thm} is technical, i.e., they are not only optimal, but have the least absolute value 2 of AACS at the time shift $\frac{M}{2}$.
 This implies that most optimal CZCPs obtained by recursive constructions can not be used in our construction. Therefore, a problem is natually proposed:

{ Does there exist other constructions of optimal $(M,\frac{M}{2}-1)$-CZCPs $({\bf c},{\bf d})$ satisfying the following structure
$$\left(
                  \begin{array}{c}
                    \mathbf{c} \\
                    \mathbf{d}\\
                  \end{array}
                \right)=\left(
                  \begin{array}{ccccccc}
                    c_0,&\ldots,&c_{\frac{M}{2}-2},&c_{\frac{M}{2}-1},& c_{\frac{M}{2}},&\cdots,&c_{M-1}\\
                   c_0,&\ldots,&c_{\frac{M}{2}-2},&-c_{\frac{M}{2}-1},&-c_{\frac{M}{2}},&\cdots,&-c_{M-1} \\
                  \end{array}
                \right),$$
 where $M$ is not a Golay number?
 }

Several tables are listed in our paper: Table 1 contains the optimal seed CZCPs satisfying condition (\ref{optimal-con}) with length 6, 12, 24 and 28; Table 2 includes the optimal CZCPs with lengths 12, 24, 48 and 56 obtained from the CZCPs in Table 1; Table 3 makes a summary of known ``best" CZCPs in the literature; and Table 4 summarizes the lengths $N$ for known optimal CZCPs, where $N\leq 100$ and is not a Golay number.

\end{document}